\newtheorem{definition}{Definition}[section]
\newtheorem{procedure}{Procedure}[section]
\newtheorem{theorem}{Theorem}[section]
\newtheorem{lemma}{Lemma}[section]
\newtheorem{remark}{Remark}[section]
\numberwithin{equation}{section}
\newcommand{\prone}{\hat{\pi}_0^{\MakeUppercase{\romannumeral 1}}}
\newcommand{\prtwo}{\hat{\pi}_0^{\MakeUppercase{\romannumeral 2}}}
\begin{document}
 
\title{Assessing replicability of findings across two studies of multiple features}
\maketitle

\begin{center}
Marina Bogomolov \\
\emph{Faculty of Industrial  Engineering and Management, Technion --
Israel Institute of Technology, Haifa, Israel. E-mail:
marinabo@tx.technion.ac.il }\\
Ruth Heller \\
\emph{Department of Statistics and Operations Research, Tel-Aviv
university, Tel-Aviv, Israel. E-mail: ruheller@post.tau.ac.il}\\
\end{center}

\begin{abstract}
Replicability analysis aims to identify the findings that replicated across independent studies that examine the same features. 
We provide powerful novel replicability analysis procedures for two
studies for FWER and for FDR control on the replicability claims. The suggested procedures
first select the promising features from each study
solely based on that study, and then test for replicability only the
features that were selected in both studies. We 
incorporate the plug-in estimates of the fraction of null hypotheses
in one study among the selected hypotheses by the other study. Since
the fraction of nulls in one study among the selected features from
the other study is typically small, the power gain can be remarkable. We provide theoretical
guarantees for the control of the appropriate error rates, as well as
simulations that demonstrate the excellent power properties of the
suggested procedures. We demonstrate the usefulness of our
procedures on real data examples from two application fields: 
behavioural genetics and microarray studies.  \
\end{abstract}

\section{Introduction}
In modern science, it is often the case that each study screens many
features. Identifying which of the many features screened have
replicated findings, and the extent of replicability for these
features, is of great interest. For example, the association of single nucleotide polymorphisms (SNPs) with a
phenotype is typically considered a scientific finding only if it has been  discovered in independent
studies, that examine the same associations with phenotype, but on
different cohorts, with different environmental exposures \citep{Heller14b}.

Two studies that examine the same problem may only partially agree on which features have  signal. For example, in  the two microarray studies discussed in Section \ref{subsec-big-example}, among the 22283 probes examined in each study we estimated that 29\% have signal in both studies, but 32\% have signal in exactly one of the studies. Possible explanations for having signal in only one of the studies include bias (e.g., in the cohorts selected or in the laboratory process), and the fact that the null hypotheses tested may be too specific (e.g., to the specific cohorts that were subject to specific exposures in a each study). In a typical meta-analysis, all the features with signal in at least one of the studies are of interest (estimated to be 61\% of the probes in our example). However, the subset of the potential meta-analysis findings which have signal in both studies may be of particular interest, for both verifiability and generalizability of the results. Replicability analysis targets this subset, and aims to identify the features with signal in both studies (estimated to be 29\% of the probes in our example).

Formal statistical methods for assessing
replicability, when each study examines many features,  were developed only recently.
An empirical Bayes approach for two studies was suggested by \cite{Li11}, and for at least two studies by \cite{Heller14}. The accuracy of the empirical Bayes analysis relies on the ability to estimate well the unknown parameters, and thus it may not be suitable for applications with a small number of features and non-local dependency in the measurements across features. A frequentist approach  was suggested in \cite{benjamini09}, which suggested applying the Benjamini-Hochberg (BH) procedure \citep{yoav1} to the maximum of the two studies $p$-values.
However, \cite{Heller14} and \cite{Bogomolov13} noted that the power of this procedure may be low when there is nothing to discover in most features. \cite{Bogomolov13} suggested instead applying twice their procedures for establishing replicability from a primary to a follow-up study, where each time one of the studies takes on the role of a primary study and the other the role of the follow-up study.

In this work we suggest novel procedures for establishing
replicability across two studies, which are especially useful in
modern applications when the fraction of features with signal
is small (e.g.,  the approaches of \cite{Bogomolov13} and
\cite{benjamini09} will be less powerful whenever the fraction of
features with signal is smaller than half).  The advantage of our
procedures over previous ones is due to two main factors. First,
these procedures are based on our novel approach, which 
selects the promising features from each study solely based on that
study, and then tests for replicability only the features that were
selected in both studies. This approach focuses attention on the promising features, and has the added advantage of reducing the number of features that need to be accounted for in the subsequent replicability analysis.  Note that since the selection is only a first step, it may be much more liberal than that made by a multiple testing procedure, and can include all features that seem interesting to the investigator (see Remark \ref{rem-selectiontype} for a discussion of selection by multiple testing). 
Second, we incorporate in our procedures
estimates of the fraction of nulls in one study among the features
selected in the other study. We show that exploiting these estimates
can lead to far more replicability claims while still controlling
the relevant error measures. For single studies, multiple testing
procedures that incorporate estimates of the fraction of nulls, i.e.
the fraction of features in which there is nothing to discover, are
called adaptive procedures \citep{BHadaptive} or plug-in procedures
\citep{finner09}. One of the simplest, and still very popular,
estimators is the plug-in estimator, reviewed in Section
\ref{subsec-reviewplug-in}. %Using the plug-in estimator improves
%power the smaller the estimator is.
The smaller is the fraction of nulls, the higher is the power gain
due to the use of the plug-in estimator. In this work, there is a
unique opportunity for using adaptivity: even if the fraction of
nulls in each individual study is close to one, the fraction of
nulls in study one (two) among the selected features based on study
two (one) may be small since the selected features are likely to
contain mostly features with false null hypotheses in both studies.
In the data examples we consider,  the fraction of nulls in one
study among the selected in the other study was lower than
50\%, and we show in simulations that the power gain from adaptivity can be large.

Our procedures also report the strength of the evidence towards replicability by a number for each outcome, the  $r$-value for replicability, introduced in \cite{Heller14b} and reviewed in Section
\ref{sec-notation}. The  remaining of the paper is organized as follows. In  Section
\ref{sec-notation} we describe the formal mathematical framework. We
introduce our new non-adaptive FWER- and FDR-replicability  analysis
procedures in Section \ref{sec-non-adapt}, and their adaptive
variants in Section \ref{sec-adapt}.  For simplicity, we shall present the notation, procedures, and theoretical results for one-sided hypotheses tests in Sections \ref{sec-notation}-\ref{sec-adapt}.
In Section \ref{sec-twosided}
we present the necessary modifications for two-sided hypotheses,
which turn out to be minimal. In Section \ref{sec-estthresholds} we suggest selection rules with
optimal properties.
%consider incorporating in our procedures the selection rules with
%two parameters, and suggest estimation methods for the optimal
%parameters of these procedures.
 In Sections \ref{sec-sim} and
\ref{sec-example} we present a simulation study and real data
examples, respectively. Conclusions are given in Section
\ref{sec-Discussion}. Lengthy proofs of theoretical results are in
the Appendix.

\subsection{Review of the plug-in estimator for estimating the fraction of nulls}\label{subsec-reviewplug-in}

 Let $\pi_0$ be the fraction of null hypotheses. \cite{schweder82} proposed estimating this fraction by $  \frac{\#\{
p-values>\lambda\}}{m(1-\lambda)},$ where $m$ is the number of
features and $\lambda \in (0,1)$.
 The slightly inflated  plug-in estimator  $$\hat \pi_0 =  \frac{\# \{ p-values>\lambda\}+1}{m(1-\lambda)}$$ has been incorporated into multiple testing procedures in recent years. %The adaptive procedures replace $m$ with $m\hat \pi_0$ in the multiple testing procedure.
  For independent $p$-values, \cite{storey2} proved that applying the BH procedure with $m\hat \pi_0$ instead of $m$ controls the FDR, and \cite{finner09} proved that applying Bonferroni with $m\hat \pi_0$ instead of $m$ controls the FWER.

Adaptive procedures in single studies have larger power gain over
non-adaptive procedures when the fraction of nulls, $\pi_0$, is
small. This is so because these procedures essentially apply the
original procedure at level $1/\hat \pi_0$ times the nominal level
to achieve FDR or FWER control at the nominal level.
\cite{finner09} showed in simulations that the power gain of using
$m\hat \pi_0$ instead of $m$ can be small when the fraction of nulls
is 60\%, but large when the fraction of nulls is 20\%. 

The plug-in estimator  is typically less conservative (smaller) the
larger $\lambda$ is.  This follows from Lemma 1 in
\cite{Dickhaus12}, that showed  that for a single study the
estimator is biased upwards, and that the bias is a decreasing
function of $\lambda$ if the cumulative distribution function (CDF)
of the non-null $p$-values is concave (if the $p$-values are based
on a test statistic whose density is eventually strictly decreasing,
then concavity will hold, at least for small $\lambda$).
\cite{yoav2} noted that the FDR of the BH procedure which
incorporates the plug-in estimator with $\lambda=0.5$ is sensitive
to deviations from the assumption of independence, and it may be
inflated above the nominal level under dependency.
\cite{Blanchard09} further noted that although under
equi-correlation among the test statistics using the plug-in
estimators does not control the FDR with $\lambda=0.5$, it does
control the FDR with $\lambda= q/(q+1+1/m) \approx q$.
\cite{Blanchard09} compared in simulations with dependent test
statistics the adaptive BH procedure using various estimators of the
fraction of nulls for single studies, including the plug-in
estimator with $\lambda \in \{0.05,0.5\}$. Their conclusion was that
the plug-in estimator with $\lambda = 0.05$ was superior to all
other estimators considered,  since it had the highest power overall
without inflating the FDR above the 0.05 nominal level.

\section{Notation, goal, and review for replicability analysis }\label{sec-notation}
 Consider a family of $m$ features examined in two independent studies. The effect of feature $j\in \{ 1,\ldots,m\}$ in study $i\in \{1,2\}$ is $\theta_{ij}$. Let $H_{ij}$ be the hypothesis indicator, so $H_{ij} = 0$ if $\theta_{ij} =  \theta_{ij}^0$, and $H_{ij} = 1$ if $\theta_{ij}> \theta_{ij}^0$.

Let $\vec H_j = (H_{1j}, H_{2j})$. The set of  possible states of
$\vec H_j$ is $\mathcal{H} = \{\vec{h} = (h_1,h_2): (0,0), (1,0),
(0,1), (1,1) \}.$ The
goal of inference is to discover as many features as possible with
$\vec H_j \notin \mathcal H^0$, where $\mathcal
H^0\subset\mathcal{H}.$ For replicability analysis, $\mathcal{H}^0 =
\mathcal{H}^0_{NR} =  \{(0,0), (0,1), (1,0)\}$. For a typical meta-analysis, $\mathcal{H}^0 =\{(0,0)\}$,  and the number of features with state $(0,0)$ can be much smaller than the number of features with states in $\mathcal{H}^0_{NR}$, see the example in Section \ref{subsec-big-example}.

We aim to discover as many  features with $\vec H_j = (1,1)$
 as possible, i.e., true
replicability claims, while controlling for false replicability
claims, i.e. replicability claims for features with $\vec H_j
\in\mathcal{H}^0_{NR} .$ Let $\mathcal R$ be the set of indices of
features with replicability claims. The FWER and FDR for
replicability analysis are defined as follows:
$$FWER = \textmd{Pr}\left(|\mathcal R \cap \{j: \vec H_j \in \mathcal{H}^0_{NR} \} |>0\right), \quad FDR = E\left(\frac{|\mathcal R \cap \{j: \vec H_j \in \mathcal{H}^0_{NR} \} |}{\max(|\mathcal R|,1 )} \right), $$
where $E(\cdot)$ is the expectation.

Our novel procedures first select promising features from each study solely based on the data of that study.
Let $\mathcal S_i$ be the index set of features selected in
study $i,$ for $i\in \{1,2\},$ and let $S_i= |\mathcal S_i|$ be
their number. The procedures proceed towards making replicability
claims only on the index set of features which are selected in both
studies, i.e.
$\mathcal{S}_1\cap \mathcal{S}_2.$ For example, selected sets may include all (or a subset of) features with two-sided $p$-values below $\alpha$. See Remark \ref{rem-selectiontype} for a discussion about the selection process. 

Let $P_i=(P_{i1}, \ldots,P_{im})$ be the $m$-dimensional random
vector of $p$-values of study $i\in\{1,2\},$ and $p_i=(p_{i1},
\ldots,p_{im})$ be its realization.
We shall assume the following condition is satisfied for $(P_1,
P_2)$:
\begin{definition}The studies satisfy the \emph{null independence-across-studies condition} if
for all $j$ with $\vec{H}_j\in \mathcal{H}^0_{NR}$, if $H_{1j}=0$
then $P_{1j}$ is independent of $P_2$, and if $H_{2j}=0$ then
$P_{2j}$ is independent of $P_1$.
\end{definition}
This condition is clearly satisfied if the two studies are
independent, but it  also allows the pairs $(P_{1j},
P_{2j})$ to be dependent for $\vec H_j\notin  \mathcal{H}^0_{NR}$.
Note moreover that this condition does not pose any restriction on
the joint distribution of $p$-values within each study.

We shall assess the evidence towards replicability by a quantity we call the $r$-value, introduced in \cite{Heller14b}, which is the adjusted $p$-value for replicability analysis.
In a single study, the adjusted $p$-value of a feature is the
smallest level (of FWER or FDR) at which it is discovered
\citep{wright92}. Similarly, for feature $j$,
the $r$-value is the smallest level (of FWER or FDR) at which
feature $j$ is declared replicable.

The simplest example of $p$-value adjustment for a single
study $i\in \{1,2\}$ is Bonferroni, with adjusted $p$-values
$p_{ij}^{adj-Bonf}=m p_{ij}, j=1,\ldots, m$. The BH adjusted
$p$-values build upon the Bonferroni adjusted $p$-values
\citep{reiner2}.
 The BH adjusted $p$-value for
feature $j$ is defined to be
$$  \min_{\{k:\, p_{ik}^{adj-Bonf}\geq p_{ij}^{adj-Bonf},\, k=1,\ldots,m \}} \frac{p_{ik}^{adj-Bonf}}{rank(p_{ik}^{adj-Bonf})},$$
where $rank(p_{ik}^{adj-Bonf})$ is the rank of the Bonferroni
adjusted $p$-value for feature $k$, with maximum rank for ties.
For two studies, we can for example define the Bonferroni-on-max
$r$-values to be $r_j^{Bonf-max}=m\max (p_{1j}, p_{2j}), j=1,\ldots,
m$. The BH-on-max $r$-values build upon the Bonferroni-on-max
$r$-values exactly as in single studies. The BH-on-max $r$-value for
feature $j$  is defined to be
$$  \min_{\{k: \,r_k^{Bonf-max}\geq r_j^{Bonf-max},\, k=1,\ldots,m \}} \frac{r_k^{Bonf-max}}{rank(r_k^{Bonf-max})},$$
where $rank(r_k^{Bonf-max})$ is the rank of the Bonferroni-on-max
adjusted $p$-value for feature $k$, with maximum rank for ties.
Claiming as replicable the findings of all features with BH-on-max
$r$-values at most $\alpha$ is equivalent to considering as
replicability claims the discoveries from applying the BH procedure
at level $\alpha$ on the maximum of the two studies $p$-values,
suggested in \cite{benjamini09}. In this work we introduce
$r$-values that are typically much smaller than the above-mentioned
$r$-values for features selected in both studies, with the same
theoretical guarantees upon rejection at level $\alpha$, and thus
preferred for replicability analysis of two studies.

\section{Replicability among the selected in each of two studies}\label{sec-non-adapt}
Let $c\in (0,1)$, with default value $c=0.5$, be the fraction of the
significance level ``dedicated" to study one. The Bonferroni
$r$-values are
 $$r^{Bonf}_j =
\max\left(\frac{S_2p_{1j}}{c}, \frac{S_1p_{2j}}{1-c}\right), \quad j
\in \mathcal S_1\cap \mathcal S_2.
$$

The FDR $r$-values build upon the Bonferroni $r$-values and are
necessarily smaller: \begin{align} r^{FDR}_j = \min_{\{i:\,
r^{Bonf}_i\geq r^{Bonf}_j,\, i \in \mathcal S_1\cap \mathcal S_2
\}} \frac{r^{Bonf}_i}{rank(r^{Bonf}_i)},\quad j \in \mathcal S_1\cap
\mathcal S_2.\label{r_FDR}\end{align} where $rank(r^{Bonf}_i)$ is
the rank of the Bonferroni $r$-value for feature $i \in \mathcal
S_1\cap \mathcal S_2$, with maximum rank for ties.

Declaring as replicated all features with Bonferroni $r$-values at most $\alpha$ controls the FWER at level $\alpha$, and declaring as replicated all features with FDR $r$-values at most $\alpha$ controls the FDR at level $\alpha$ under independence, see Section \ref{subsec-theoreticalproperties}.

The relation between the Bonferroni and FDR $r$-values is similar to
that of the adjusted Bonferroni and adjusted BH $p$-values described
in Section \ref{sec-notation}. For the features selected in both
studies, if less than half of the features are selected by each
study, it is easy to show that  FDR (Bonferroni) $r$-values given above, using
$c=0.5$, will be smaller than (1) the 
BH-on-max (Bonferroni-on-max) $r$-values described in Section \ref{sec-notation}, and (2) the $r$-values that correspond to the FDR-controlling symmetric procedure in \cite{Bogomolov13}, which
will be typically smaller than BH-on-max $r$-values but larger than FDR $r$-values in (\ref{r_FDR}) due to taking into account the multiplicity of all features considered.

\subsection{Theoretical properties}\label{subsec-theoreticalproperties}
Let $\alpha\in (0,1)$ be the level of control desired, e.g. $\alpha = 0.05$. Let $\alpha_1=c\alpha$ be the fraction of $\alpha$ for study one, e.g. $\alpha_1 = \alpha/2$.

The procedure that makes replicability claims for features with
Bonferroni $r$-values at most $\alpha$ is a special case of the
following more general procedure.

\begin{procedure}\label{proc-FWER}
FWER-replicability analysis on the selected features $\mathcal
S_1\cap \mathcal S_2$:
\begin{enumerate}
\item Apply a FWER controlling
procedure at level $\alpha_1$ on the set $\{p_{1j}, j\in
\mathcal{S}_2\},$ and let $\mathcal R_{1}$ be the set of indices of discovered features. Similarly, apply a FWER controlling procedure at level
$\alpha-\alpha_1$  on the set $\{p_{2j}, j\in \mathcal{S}_1\},$  and let
$\mathcal R_{2}$ be the set of indices of discovered features.
\item The set of indices of features with replicability claims is  $\mathcal R_{1}\cap \mathcal R_{2}$.
\end{enumerate}
\end{procedure}

When using Bonferroni in Procedure \ref{proc-FWER},
feature $j\in \mathcal S_1\cap \mathcal S_2$ is among the
discoveries if and only if $(p_{1j}, p_{2j})\leq (\alpha_1/S_2,
\quad (\alpha-\alpha_1)/S_1). $ Therefore,  claiming replicability for all features 
with Bonferroni $r$-values at most $\alpha$ is equivalent to
Procedure \ref{proc-FWER} using Bonferroni. 

\begin{theorem}\label{thm-fwer}
If the null independence-across-studies condition is satisfied, 
then Procedure \ref{proc-FWER} controls the FWER for replicability
analysis at level $\alpha$.
\end{theorem}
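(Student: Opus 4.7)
The plan is to bound the replicability FWER by $\alpha_1 + (\alpha-\alpha_1)=\alpha$ via a union bound over the two halves of Procedure \ref{proc-FWER}. First I would note that any false replicability claim sits at a feature $j$ with $\vec H_j\in\mathcal H^0_{NR}$, which forces at least one of $H_{1j}=0$ or $H_{2j}=0$. Introduce the events $E_i=\{\exists\, j\in\mathcal R_i \text{ with } H_{ij}=0\}$ for $i=1,2$. Because the reported replicability set is $\mathcal R_1\cap\mathcal R_2$, the event of at least one false replicability claim is contained in $E_1\cup E_2$, and so it suffices to show $\Pr(E_1)\le\alpha_1$ and $\Pr(E_2)\le\alpha-\alpha_1$ and invoke the union bound.

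Next I would bound $\Pr(E_1)$ by conditioning on $\mathcal S_2$. The selected set $\mathcal S_2$ is a deterministic function of $P_2$ only, and the inner FWER procedure in step 1 is applied to the vector $\{P_{1j}: j\in\mathcal S_2\}$ at nominal level $\alpha_1$. The key observation is that for any $j$ with $H_{1j}=0$ (which implies $\vec H_j\in\mathcal H^0_{NR}$), the null independence-across-studies condition gives $P_{1j}\perp P_2$; in particular $P_{1j}$ is independent of $\mathcal S_2$, so its conditional distribution given $\{\mathcal S_2=s\}$ coincides with its (super-uniform) marginal. Thus, for every realisation $s$, the joint conditional law of $\{P_{1j}: j\in s\}$ satisfies whatever marginal super-uniformity hypothesis the inner FWER procedure requires, and therefore $\Pr(E_1\mid\mathcal S_2=s)\le\alpha_1$. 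Taking expectations over $\mathcal S_2$ gives $\Pr(E_1)\le\alpha_1$, and the symmetric argument with the roles of the studies swapped gives $\Pr(E_2)\le\alpha-\alpha_1$.

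The main obstacle, and the reason the null independence-across-studies condition is stated in the form given, is checking that the inner FWER procedure remains valid after the random, data-driven restriction to indices in $\mathcal S_2$. The condition is crafted to supply exactly what is needed: independence of each study-one null $p$-value from the entirety of $P_2$, which preserves the super-uniform marginals under conditioning on any $\sigma(P_2)$-measurable selection event, and leaves intact the joint law of $\{P_{1j}: H_{1j}=0\}$ as viewed within study one. Crucially, no assumption is imposed on the joint law of $(P_{1j},P_{2j})$ when $\vec H_j=(1,1)$, which is exactly appropriate since those features cannot contribute to a false replicability claim.
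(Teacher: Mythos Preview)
Your proposal is correct and follows essentially the same route as the paper's proof. The paper defines $V_i=|\mathcal R_i\cap\{j:H_{ij}=0\}|$, bounds the replicability FWER by $\Pr(V_1>0)+\Pr(V_2>0)$ via the union bound, and then conditions on the full vector $P_2$ (respectively $P_1$) rather than on $\mathcal S_2$ (respectively $\mathcal S_1$) to invoke the FWER guarantee of the inner procedure; your events $E_i$ are precisely $\{V_i>0\}$, and your conditioning on $\mathcal S_2$ works for the same reason since $\mathcal S_2$ is $\sigma(P_2)$-measurable and the null independence-across-studies condition gives independence from all of $P_2$.
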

\begin{proof}
Let $V_1 =|\mathcal R_1 \cap \{j:H_{1j}=0\} | $ and $V_2 = |\mathcal R_2 \cap \{j:H_{2j}=0\}|$ be the number of true %elementary
null hypotheses rejected in study one and in study two,
respectively, by Procedure \ref{proc-FWER}. Then the FWER for replicability analysis is
\begin{eqnarray}
E(\textbf{I}[V_1+V_2>0])\leq E(E(\textbf{I}[V_1>0]|P_2))
+E(E(\textbf{I}[V_2>0]|P_1)). 
\nonumber
\end{eqnarray}
Clearly, $E(\textbf{I}[V_1>0]|P_2)\leq \alpha_1$ since $P_{1j}$  is
independent of $P_2$ for all $j$ with $H_{1j}=0$, and a FWER controlling
procedure is applied on $\{p_{1j}, j\in
\mathcal{S}_2\}.$ Similarly, $E(\textbf{I}[V_2>0]|P_1)\leq \alpha-\alpha_1$. It thus follows that the FWER for replicability analysis is at most $\alpha$. 
\end{proof}

The procedure that rejects the features with FDR $r$-values at most
$\alpha$ is equivalent to the following procedure, see Lemma
\ref{lem_fdr} for a proof.
\begin{procedure}\label{procfdrsym}
FDR-replicability analysis on the selected features $\mathcal
S_1\cap \mathcal S_2$:
\begin{enumerate}
\item Let $$R\triangleq\max\left\{r:
\sum_{j\in\mathcal S_1\cap \mathcal{S}_2}\textbf{I}\left[(p_{1j},
p_{2j})\leq\left(\frac{r\alpha_1}{S_2},
\frac{r(\alpha-\alpha_1)}{S_1}\right)\right] = r\right\}.$$
\item The set of indices with replicability claims is
\begin{align*}\mathcal R= \{j: (p_{1j},
p_{2j})\leq\left(\frac{R\alpha_1}{S_2},
\frac{R(\alpha-\alpha_1)}{S_1}\right), j \in \mathcal S_1\cap
\mathcal S_2\}.\end{align*}
\end{enumerate}
\end{procedure}

This procedure controls the FDR for replicability analysis at level
$\alpha$ as long as the selection rules by which the sets
$\mathcal{S}_1$ and $\mathcal{S}_2$ are selected are stable (this is a very lenient requirement, see \cite{Bogomolov13} for examples).

\begin{definition}\citep{Bogomolov13} A stable selection rule satisfies the following condition: for any
selected feature, changing its $p$-value so that the feature is
still selected while all other $p$-values are held fixed, will not
change the set of selected features.\end{definition}

 \begin{theorem}\label{indep}
If the null independence-across-studies condition is satisfied, and
the selection rules by which the sets
$\mathcal{S}_1$ and $\mathcal{S}_2$ are selected are stable,
 then Procedure \ref{procfdrsym} controls the FDR for replicability
analysis at level $\alpha$  if one of the following items is
satisfied: 
\begin{enumerate}
\item[(1)]  The $p$-values from true null hypotheses
within each study are each independent of all other $p$-values.
\item[(2)] Arbitrary dependence among the $p$-values within each study, when $S_i$ in Procedure \ref{procfdrsym} is replaced by $S_i\sum_{k=1}^{S_i}1/k$, for $i=1,2$.
\end{enumerate}

\end{theorem}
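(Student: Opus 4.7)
The plan is to bound the FDR via a symmetric decomposition of the false replicability claims. Since $\mathcal{H}^0_{NR}=\{(0,0),(0,1),(1,0)\}$ consists precisely of those states for which $h_1=0$ or $h_2=0$, one has $|\mathcal{R}\cap\{j:\vec{H}_j\in\mathcal{H}^0_{NR}\}|\leq V_1+V_2$, where $V_i=|\mathcal{R}\cap\{j:H_{ij}=0\}|$, and hence the FDR is bounded by $E[V_1/\max(R,1)]+E[V_2/\max(R,1)]$ with $R=|\mathcal{R}|$. Procedure \ref{procfdrsym} is invariant under swapping the two studies together with $\alpha_1\leftrightarrow\alpha-\alpha_1$ and $S_1\leftrightarrow S_2$, so it suffices to show $E[V_1/\max(R,1)]\leq\alpha_1$; the companion bound $E[V_2/\max(R,1)]\leq\alpha-\alpha_1$ follows by the same argument with the roles of the two studies interchanged.

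To establish the bound on $E[V_1/\max(R,1)]$, I would write $V_1=\sum_{j:H_{1j}=0}\textbf{I}[j\in\mathcal{R}]$ and analyze each term separately. Fix a null feature $j$ and condition first on $P_2$: by the null independence-across-studies condition, $P_{1j}$ is independent of $P_2$, and $\mathcal{S}_2$, $S_2$, and the values $p_{2j}$ are all determined by $P_2$. Under item (1), I would further condition on $\{P_{1k}:k\neq j\}$, which is independent of the null $P_{1j}$. The crucial use of stability of the selection rule defining $\mathcal{S}_1$ is that, on the event $\{j\in\mathcal{S}_1\}$, the set $\mathcal{S}_1$ (and hence $S_1$) is determined by $\{P_{1k}:k\neq j\}$ alone and does not vary with $p_{1j}$. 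With $S_1$ and $S_2$ thus frozen, the procedure restricted to $\mathcal{S}_1\cap\mathcal{S}_2$ is precisely a step-up procedure applied to the values $u_k=\max(S_2p_{1k}/\alpha_1,\,S_1p_{2k}/(\alpha-\alpha_1))$ with integer critical values $1,2,\ldots$. Writing $\textbf{I}[j\in\mathcal{R}]/\max(R,1)=\sum_{r\geq 1}\textbf{I}[j\in\mathcal{R},R=r]/r$, applying the classical step-up self-consistency argument together with the super-uniformity of the null $P_{1j}$, and finally summing over $j$, yields the bound $\alpha_1$.

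For item (2) the overall conditioning scheme is unchanged, but without independence within $P_1$ one cannot exploit the usual cancellation between the self-consistent event $\{R=r\}$ and the probability that the threshold is met. Instead, following the Benjamini--Yekutieli strategy, one bounds each summand in $\sum_{r\geq 1}\textbf{I}[j\in\mathcal{R},R=r]/r$ using the inflated threshold $r\alpha_1/(S_2\sum_{k=1}^{S_2}1/k)$, and the factor $\sum_{r=1}^{S_2}1/r$ accumulated from the decomposition is exactly cancelled by the harmonic inflation in the denominator. The main technical obstacle throughout is the joint dependence of $\mathcal{R}$, $S_1$, and the rejection threshold on the very null $p$-value $P_{1j}$ whose uniformity we wish to exploit; stability of the selection rule is the indispensable ingredient that decouples selection from testing, freezing $\mathcal{S}_1$ and $S_1$ conditional on $\{P_{1k}:k\neq j\}$ on the event $\{j\in\mathcal{S}_1\}$ and thereby reducing the analysis to the familiar step-up setting.
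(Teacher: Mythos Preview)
Your proposal is correct and follows essentially the same route as the paper. Both arguments decompose the false-replicability count via $1-H_{1j}H_{2j}\le(1-H_{1j})+(1-H_{2j})$, condition on the other study's $p$-values to invoke the null independence-across-studies condition, use stability of the selection rule to express $\mathcal S_1$, $S_1$ and the rejection event as leave-one-out quantities in $P_1^{(j)}$, and then apply the step-up self-consistency bound for item~(1) and the Benjamini--Yekutieli harmonic argument for item~(2). The paper organizes this through an auxiliary lemma with abstract weights $W_1,W_2$ (which also serves the adaptive Theorem~\ref{thm-proc-fdr-adaptive}), but the substance of the argument is identical to yours.

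One small clarification on item~(2): you write that ``the overall conditioning scheme is unchanged,'' but in the paper's proof (and in the standard BY argument) one conditions only on $P_2$, not additionally on $P_1^{(j)}$. Under arbitrary dependence within study one, $P_{1j}$ need not be super-uniform given $P_1^{(j)}$, so the extra conditioning cannot be carried over; the BY telescoping works precisely because only marginal super-uniformity of $P_{1j}$ given $P_2$ is required. Stability is still used, but only to guarantee that the leave-one-out rejection events are well-defined functions of $(P_1^{(j)},P_2)$ on $\{j\in\mathcal S_1\}$. Your subsequent description of the BY strategy is correct, so this is a matter of phrasing rather than a gap in the plan.
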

See Appendix \ref{app-thm-fdr-indep} for a proof. 

\begin{remark}
The FDR $r$-values  for the procedure that is valid for arbitrary
dependence, denoted by $\tilde{r}_j^{FDR}, j\in \mathcal{S}_1\cap \mathcal{S}_2$, are computed using formula (\ref{r_FDR}) where the
Bonferroni $r$-values $r_j^{Bonf}$ are replaced by
 \begin{align}\tilde{r}_j =
\max\left(\frac{(\sum_{i=1}^{S_2}1/i)S_2p_{1j}}{c},
\frac{(\sum_{i=1}^{S_1}1/i)S_1p_{2j}}{1-c}\right), \quad j \in
\mathcal S_1\cap \mathcal S_2.\label{rtilda}
\end{align}

\end{remark}

\begin{remark}\label{rem-selectiontype}
An intuitive approach towards replicability may be to apply a multiple testing procedure on each study separately, with discovery sets $\mathcal D_1$ and $\mathcal D_2$ in study one and two, respectively, and then claim replicability on the set $\mathcal D_1\cap \mathcal D_2$. However, even if the multiple testing procedure has guaranteed FDR control at  level $\alpha$, it is easy to construct examples where the expected fraction of false replicability claims in  $\mathcal D_1\cap \mathcal D_2$ will be far larger than $\alpha$. An extreme example is the following: half of the  features have  $\vec H_j = (1,0)$, the remaining half have  $\vec H_j = (0,1)$, and the signal is very strong. Then in study one all features with  $\vec H_j = (1,0)$ and few features with $\vec H_j = (0,1)$ will be discovered, and in study two all features with  $\vec H_j = (0,1)$ and few features with $\vec H_j = (1,0)$ will be discovered, resulting in a non-empty set $\mathcal D_1\cap \mathcal D_2$ which contains only false replicability claims. Interestingly, if the multiple testing procedure is Bonferroni at level $\alpha$, then the FWER on replicability claims of the set $\mathcal D_1\cap \mathcal D_2$  is at most $\alpha$. However, this procedure (which can be viewed as Bonferroni on the maximum of the two study $p$-values) can be far more conservative than our suggested Bonferroni-type procedure. If we select in each study separately all features with $p$-values below $\alpha/2$, resulting in selection sets $\mathcal S_1$ and $\mathcal S_2$ in study one and two, respectively, then using our Bonferroni-type procedure we claim replicability for features with $(p_{1j}, p_{2j})\leq (\alpha/(2S_2),\alpha/(2S_1))$. Our discovery thresholds, $ (\alpha/(2S_2),\alpha/(2S_1))$, are both larger than $\alpha/m$ as long as the number of features selected by each study is less than half, and thus can lead to more replicability claims  with FWER control at level $\alpha$. 
\end{remark}

\section{Incorporating the plug-in estimates}\label{sec-adapt}
When the non-null hypotheses are mostly non-null in both studies,
i.e.,  there are more features with $\vec H_j = (1,1)$ than with
$\vec H_j = (1,0)$ or  $\vec H_j = (0,1)$, then the non-adaptive
procedures for replicability analysis may be over conservative. The
conservativeness follows from the fact that the fraction of null
hypotheses in one study among the selected in the other study is small.  %In
%$\mathcal{S}_1$  the hypotheses with $\vec H_j
%\in \{(0,0), (0,1\}$ have a smaller chance of being among the
%selected in  $\mathcal{S}_1$,
The set $\mathcal{S}_1$ is more likely to contain hypotheses with
$\vec H_j \in \{(1,0), (1,1)\}$ than hypotheses with $\vec H_j \in
\{(0,0), (0,1)\},$ and therefore the fraction of true null
hypotheses in study two among the selected in study one, i.e.,
$\sum_{j \in \mathcal{S}_1} (1-H_{2j})/S_1$, may be much smaller
than one (especially if there are more features with $\vec H_j =
(1,1)$ than with $\vec H_j = (1,0)$). Similarly, the fraction of
true null hypotheses in study one among the selected based on study
two, i.e., $\sum_{j \in \mathcal{S}_2} (1-H_{1j})/S_2$, may be much
smaller than one.

The non-adaptive procedures for replicability analysis in Section \ref{sec-non-adapt}  control the error-rates at levels that are conservative by the expectation of these fractions.
Procedures \ref{proc-FWER} using Bonferroni and \ref{procfdrsym}
control the FWER and FDR, respectively, at level which is at most
\begin{equation}
\alpha_1E\left(\frac{\sum_{j \in \mathcal{S}_2}
(1-H_{1j})}{S_2}\right)+(\alpha-\alpha_1)E\left(\frac{\sum_{j\in
\mathcal{S}_1} (1-H_{2j})}{S_1}\right), \nonumber
\end{equation}
 which can be much
smaller than $\alpha$ if  the above expectations are far smaller
than one. This upper bound follows for FWER since an upper bound
for the FWER of a  Bonferroni procedure is the desired level times
the fraction of null hypotheses in the family tested, and  for the FDR   from the proof of item 1 of Theorem \ref{indep}.

 We therefore suggest adaptive variants, that first estimate the expected fractions of true null hypotheses among the selected. We use the slightly inflated plug-in estimators (reviewed in Section \ref{subsec-reviewplug-in}):
\begin{align}\label{eq-adaptiveestimates}
\prone = \frac{1+\sum_{j\in
\mathcal{S}_{2,\lambda}}\textbf{I}(P_{1j}>\lambda)}{S_{2,\lambda}(1-\lambda)};\,\,\,
\prtwo = \frac{1+\sum_{j\in
\mathcal{S}_{1,\lambda}}\textbf{I}(P_{2j}>\lambda)}{S_{1,\lambda}(1-\lambda)},
 \end{align}
 where $0<\lambda<1$ is a fixed parameter, $\mathcal{S}_{i,\lambda} = \mathcal{S}_i\cap \{j:P_{ij}\leq \lambda\}$,
and  $S_{i,\lambda} = |\mathcal{S}_{i,\lambda}|$, for $i=1,2$.
 Although $\prone$ and $\prtwo$ depend on the tuning parameter $\lambda,$ we suppress the dependence of the estimates on
 $\lambda$ for ease of notation.

 The adaptive Bonferroni $r$-values  for fixed $c = \alpha_1/\alpha$
are: $$r^{adaptBonf}_j = \max\left(\frac{\prone
S_{2,\lambda}p_{1j}}{c}, \frac{\prtwo
S_{1,\lambda}p_{2j}}{1-c}\right), \quad j \in \mathcal
S_{1,\lambda}\cap \mathcal S_{2,\lambda}.$$ As in
Section \ref{sec-non-adapt}, the adaptive FDR $r$-values  build upon the
adaptive Bonferroni $r$-values:
$$ r^{adaptFDR}_j = \min_{\{i:\, r^{adaptBonf}_i\geq r^{adaptBonf}_j,\, i \in \mathcal{S}_{1,\lambda}\cap\mathcal{S}_{2,\lambda} \}} \frac{r^{adaptBonf}_i }{rank(r^{adaptBonf}_i)}, \quad j \in
\mathcal S_{1,\lambda}\cap \mathcal S_{2,\lambda}$$ where
$rank(r^{adaptBonf}_i)$ is the rank of the adaptive Bonferroni
$r$-value for feature $i \in \mathcal{S}_{1,\lambda}\cap
\mathcal{S}_{2,\lambda}$, with maximum rank for ties. Declaring as replicated all features with  adaptive Bonferroni/FDR $r$-values at most $\alpha$
controls the FWER/FDR for replicability analysis at level $\alpha$ under independence, see
Section \ref{sec-adapt-theoreticalproperties}.

The non-adaptive procedures in Section \ref{sec-non-adapt} only
require as input $\{p_{1j}: j \in \mathcal S_1 \}$ and $\{p_{2j}: j \in \mathcal S_2 \}$. However, if $\{p_{1j}: j \in \mathcal S_1\cup  \mathcal S_2 \}$ and $\{p_{2j}: j \in \mathcal S_1\cup \mathcal S_2 \}$ are available,  then  the adaptive procedures with $\lambda = \alpha$ are  attractive alternatives with better power, as demonstrated in  our simulations detailed in Section \ref{sec-sim}.

\subsection{Theoretical properties}\label{sec-adapt-theoreticalproperties}
The following Procedure \ref{proc-bonferroniadapt}  is equivalent to declaring as replicated all
features with Bonferroni adaptive $r$-values at most $\alpha$.
\begin{procedure}\label{proc-bonferroniadapt}
Adaptive-Bonferroni-replicability analysis on $\{(p_{1j},p_{2j}): j \in \mathcal S_1\cup S_2 \}$ with input parameter $\lambda$:
\begin{enumerate}
\item Compute $\prone, \prtwo$ and $S_{1,\lambda}, S_{2,\lambda}$.
\item  Let $\mathcal R_{1} = \{j\in \mathcal{S}_{1,\lambda}: p_{1j}\leq \alpha_1/(S_{2,\lambda}\prone)\}$ and $\mathcal R_{2} = \{j\in \mathcal{S}_{2,\lambda}: p_{2j}\leq (\alpha-\alpha_1)/(S_{1,\lambda}\prtwo)\}$
be the sets of indices of  features discovered in studies one and
two, respectively.
\item The set of indices of features with replicability claims is  $\mathcal R_{1}\cap \mathcal R_{2}$.
\end{enumerate}
\end{procedure}

\begin{theorem}\label{thm-bonferroniadapt}
If the null independence-across-studies condition is satisfied, and
the $p$-values from true null hypotheses within each study are
jointly
independent, 
then Procedure \ref{proc-bonferroniadapt} controls
the FWER for replicability analysis at level $\alpha$.
\end{theorem}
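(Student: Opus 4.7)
The plan is to follow the template of Theorem \ref{thm-fwer}: decompose the FWER for replicability by a union bound into a ``study-one'' and a ``study-two'' contribution, then handle the adaptivity through a leave-one-out argument. A false replicability claim at index $j$ forces $j\in\mathcal{R}_1\cap\mathcal{R}_2\subseteq\mathcal{S}_{1,\lambda}\cap\mathcal{S}_{2,\lambda}$ together with $H_{1j}=0$ or $H_{2j}=0$. Writing $\mathcal{N}_i=\{k:H_{ik}=0\}$, $V_1=|\mathcal{R}_1\cap\mathcal{S}_{2,\lambda}\cap\mathcal{N}_1|$, and $V_2$ symmetrically, a union bound reduces the task to proving $P(V_1>0)\le\alpha_1$ and $P(V_2>0)\le\alpha-\alpha_1$. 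By symmetry it suffices to treat $V_1$, and I will condition throughout on $P_2$, which renders $\mathcal{S}_{2,\lambda}$ and $S_{2,\lambda}$ deterministic.

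The main obstacle is that the adaptive threshold $\alpha_1/(S_{2,\lambda}\prone)$ depends on $p_{1j}$ itself whenever $j\in\mathcal{S}_{2,\lambda}$, so independence of $p_{1j}$ from the threshold is not immediate. I would resolve this with the standard leave-one-out trick: define $\prone^{-j}$ by replacing the indicator $\textbf{I}(p_{1j}>\lambda)$ in the numerator of $\prone$ with zero. Then $\prone\ge\prone^{-j}$, so using $\prone^{-j}$ in place of $\prone$ only enlarges the rejection region. The key observation is that any $j\in\mathcal{R}_1\subseteq\mathcal{S}_{1,\lambda}$ automatically satisfies $p_{1j}\le\lambda$, hence $\textbf{I}(p_{1j}>\lambda)=0$ and in fact $\prone=\prone^{-j}$ on the rejection event. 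Consequently $\{j\in\mathcal{R}_1\}\subseteq\{p_{1j}\le\alpha_1/(S_{2,\lambda}\prone^{-j})\}$, and the right-hand threshold no longer involves $p_{1j}$.

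Next I would condition additionally on $\{p_{1k}:k\ne j\}$, which fixes $\prone^{-j}$. By the null-independence-across-studies condition together with the joint independence of true-null $p$-values within study one, for $j\in\mathcal{N}_1$ the $p$-value $p_{1j}$ is independent of this $\sigma$-algebra and satisfies $P(p_{1j}\le t)\le t$. This gives $P(j\in\mathcal{R}_1\mid P_2,\{p_{1k}:k\ne j\})\le\alpha_1(1-\lambda)/\bigl(1+\sum_{k\in\mathcal{S}_{2,\lambda},\,k\ne j}\textbf{I}(p_{1k}>\lambda)\bigr)$. Summing over $j\in\mathcal{N}_1\cap\mathcal{S}_{2,\lambda}$, lower-bounding the denominator by restricting the indicator sum to $k\in\mathcal{N}_1\cap\mathcal{S}_{2,\lambda}$, and invoking the classical identity $E[1/(1+Y)]=(1-(1-p)^{n+1})/((n+1)p)$ for $Y\sim\mathrm{Bin}(n,p)$ with $n=|\mathcal{N}_1\cap\mathcal{S}_{2,\lambda}|-1$ and $p=1-\lambda$, I obtain $P(V_1>0\mid P_2)\le\alpha_1$. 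Taking expectations over $P_2$ and combining with the symmetric bound for $V_2$ yields the desired bound $\mathrm{FWER}\le\alpha$.
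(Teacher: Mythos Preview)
Your approach is essentially the paper's: union bound into a study-one and a study-two piece, condition on the other study, use the leave-one-out observation that $p_{1j}\le\lambda$ on the rejection event so the $j$th indicator drops from $\prone$, then finish with the binomial inverse-moment bound $E[1/(1+Y)]\le 1/((n+1)p)$ (the paper cites this as Lemma~1 of \cite{yoav2}).

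One technical point needs care. You condition on \emph{all} of $\{p_{1k}:k\neq j\}$ and assert that $p_{1j}$ is independent of this $\sigma$-algebra. But the theorem only assumes that the \emph{true-null} $p$-values within each study are jointly independent; it does not assume a null $p_{1j}$ is independent of the non-null $p_{1k}$'s. As written, your conditioning step therefore uses more than is granted. The paper handles this by first bounding the estimator below using only null indicators,
\[
\prone \ \ge\ \frac{1+\sum_{k\in\mathcal{S}_{2,\lambda}}(1-H_{1k})\,\mathbf{I}(P_{1k}>\lambda)}{S_{2,\lambda}(1-\lambda)},
\]
\emph{before} the leave-one-out; after that the threshold depends only on null study-one $p$-values and on $P_2$, so the stated assumptions suffice. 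Your ``lower-bounding the denominator by restricting the indicator sum to $k\in\mathcal{N}_1\cap\mathcal{S}_{2,\lambda}$'' is exactly this step, just placed one line too late. Move it ahead of the conditioning (and then condition only on $P_2$ and the null $p_{1k}$'s) and your argument goes through unchanged.
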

\begin{proof}
It is enough to prove that $E(\textbf{I}[V_1>0]|P_2)\leq \alpha_1$
and $E(\textbf{I}[V_2>0]|P_1)\leq \alpha-\alpha_1,$ as we showed in
the proof of Theorem \ref{thm-fwer}. These inequalities essentially
follow from the fact that the Bonferroni plug-in procedure controls the FWER
 \citep{finner09}. We will only show that
$E(\textbf{I}[V_1>0]|P_2)\leq \alpha_1$, since the proof that
$E(\textbf{I}[V_2>0]|P_1)\leq \alpha-\alpha_1$ is similar. We shall
use the fact that
\begin{equation}
\prone\geq
\frac{1+\sum_{j\in\mathcal{S}_{2,\lambda}}(1-H_{1j})\textbf{I}(P_{1j}>\lambda)}{S_{2,\lambda}(1-\lambda)}.
 \label{eq-lowerboundfractionnulls}
\end{equation}

\begin{eqnarray}
&&E(\textbf{I}[V_1>0]|P_2) =  \textmd{Pr}\left(\sum_{i \in \mathcal{S}_{2,\lambda}}(1-H_{1i})\textbf{I}[i\in S_{1,\lambda}, P_{1i}\leq \alpha_1/(S_{2,\lambda}\prone)]>0| P_2\right)  \nonumber \\
&& \leq\sum_{i \in \mathcal{S}_{2,\lambda}}(1-H_{1i}) \textmd{Pr}(P_{1i}\leq \min(\lambda, \alpha_1/S_{2,\lambda}\prone)| P_2) \label{eq-bonfadapt1}\\
&&\leq\sum_{i \in \mathcal{S}_{2,\lambda}}(1-H_{1i})
\textmd{Pr}\left(P_{1i}\leq\min\left(\lambda,\frac{\alpha_1}{\left(\frac{1+\sum_{j\in\mathcal{S}_{2,\lambda}}(1-H_{1j})\textbf{I}(P_{1j}>\lambda)}{1-\lambda}\right)}\right)|P_2\right)\label{eq-bonfadapt2}
\\&& = \sum_{i \in \mathcal{S}_{2,\lambda}}(1-H_{1i}) \textmd{Pr}\left(P_{1i}\leq \min\left(\lambda, \frac{\alpha_1}{\left(\frac{1+\sum_{j\in \mathcal{S}_{2,\lambda}, j\neq i}(1-H_{1j})\textbf{I}(P_{1j}>\lambda)}{1-\lambda}\right)}\right)| P_2\right) \notag\\
&& \leq \sum_{i \in \mathcal{S}_{2,\lambda}}(1-H_{1i}) \alpha_1 E\left(1/\left(\frac{1+\sum_{j\in \mathcal{S}_{2,\lambda}, j\neq i}(1-H_{1j})\textbf{I}(P_{1j}>\lambda)}{1-\lambda}\right)| P_2 \right) \label{eq-bonfadapt4}\\
&& \leq \sum_{i \in \mathcal{S}_{2,\lambda}}
(1-H_{1i})\alpha_1/\sum_{j \in \mathcal{S}_{2,\lambda}} (1-H_{1j}) =
\alpha_1. \label{eq-bonfadapt5}
\end{eqnarray}
Inequality (\ref{eq-bonfadapt1}) follows from the Bonferroni
inequality, and
 inequality (\ref{eq-bonfadapt2}) follows from (\ref{eq-lowerboundfractionnulls}). 
 Inequality (\ref{eq-bonfadapt4}) follows from the facts that for $i$ with $H_{1i}=0$, (a) $P_{1i}$ is independent of all null $p$-values from study one and from all $p$-values from study two, and (b) $\textmd{Pr}(P_{1i}\leq x)\leq x$ for all $x\in[0,1].$ 
 Inequality (\ref{eq-bonfadapt5}) follows by
applying Lemma 1 in \cite{yoav2}, which states that if $Y\sim
B(k-1,p)$ then $E(1/(Y+1))<1/(kp)$,   to $Y = \sum_{j\in
\mathcal{S}_{2,\lambda}, j\neq
i}(1-H_{1j})\textbf{I}(P_{1j}>\lambda)$, which is distributed
$B(\sum_{j \in \mathcal{S}_{2,\lambda}, j \neq i}(1-H_{1j}),
1-\lambda)$ if the null $p$-values within each study are uniformly
distributed. It is easy to show, using similar arguments, that
inequality (\ref{eq-bonfadapt5}) remains true when the null
$p$-values are stochastically larger than uniform.
\end{proof}

Declaring as replicated all features with adaptive FDR $r$-values at most
$\alpha$ is equivalent to Procedure \ref{procfdrsym} where $S_1$ and
$S_2$ are replaced by $S_{1,\lambda}\prtwo$ and
$S_{2,\lambda}\prone$ respectively, and $\mathcal{S}_1\cap
\mathcal{S}_2$ is replaced by $\mathcal{S}_{1,\lambda}\cap
\mathcal{S}_{2,\lambda},$ see Lemma \ref{lem_fdr} for a proof.

\begin{theorem}\label{thm-proc-fdr-adaptive}
If the null independence-across-studies condition holds, the
$p$-values corresponding to true null hypotheses are each
independent of all the other $p$-values, and the selection rules by
which the sets $\mathcal{S}_1$ and $\mathcal{S}_2$ are selected are
stable, then declaring as replicated all features with adaptive FDR $r$-values at most $
\alpha$ controls the
FDR for replicability analysis at level $\alpha$.
\end{theorem}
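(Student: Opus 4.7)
The plan is to adapt the argument used for Theorem \ref{indep}(1) by incorporating the plug-in manipulations from the proof of Theorem \ref{thm-bonferroniadapt}. First I would invoke the equivalent reformulation stated just before the theorem (and established in Lemma \ref{lem_fdr}): declaring as replicated the features with adaptive FDR $r$-values at most $\alpha$ is the same as running Procedure \ref{procfdrsym} on the pool $\mathcal{S}_{1,\lambda}\cap\mathcal{S}_{2,\lambda}$, with $S_1,S_2$ replaced by $S_{1,\lambda}\prtwo,S_{2,\lambda}\prone$. Writing $R=|\mathcal R|$ and $T_1=R\alpha_1/(S_{2,\lambda}\prone)$ for the induced study-one threshold, decompose the false replicability count as $V\le V_1+V_2$ with $V_i=|\mathcal R\cap\{j:H_{ij}=0\}|$. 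By the symmetric roles of the two studies it is enough to prove $E[V_1/\max(R,1)]\le\alpha_1$; the analogous bound for $V_2$ contributes the remaining $\alpha-\alpha_1$, and the two add to the desired level $\alpha$.

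Second, I would condition on $P_2$. Under the null independence-across-studies condition, $\{P_{1j}:H_{1j}=0\}$ is independent of $P_2$, and $\mathcal{S}_{2,\lambda},S_{2,\lambda}$ are $P_2$-measurable. Since $i\in\mathcal R$ forces $P_{1i}\le T_1$, and since by construction $R=T_1 S_{2,\lambda}\prone/\alpha_1$, one obtains
$$\frac{V_1}{\max(R,1)}\le\sum_{i\in\mathcal{S}_{2,\lambda}}(1-H_{1i})\frac{\alpha_1\,\textbf{I}[P_{1i}\le T_1]}{T_1\,S_{2,\lambda}\prone}.$$
Substituting the lower bound
$$\prone\ge\frac{1+\sum_{j\in\mathcal{S}_{2,\lambda}}(1-H_{1j})\textbf{I}(P_{1j}>\lambda)}{S_{2,\lambda}(1-\lambda)}$$
used in the proof of Theorem \ref{thm-bonferroniadapt}, and noting that on $\{P_{1i}\le T_1\}\subseteq\{P_{1i}\le\lambda\}$ the $i$th contribution drops from the denominator sum, leaves the leave-one-out quantity $Y_i=\sum_{j\in\mathcal{S}_{2,\lambda},j\ne i}(1-H_{1j})\textbf{I}(P_{1j}>\lambda)$. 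Stability of the selection rules lets me perturb $P_{1i}$ on $[0,\lambda]$ without altering $\{i\in\mathcal{S}_1\}$, and then the standard adaptive step-up leave-one-out computation, after further conditioning on $\{P_{1j}:j\ne i\}$, reduces the summand to an expression that can be bounded by applying Lemma 1 of \cite{yoav2} to the binomial $Y_i$, in the spirit of the chain (\ref{eq-bonfadapt4})--(\ref{eq-bonfadapt5}) of the FWER proof. Summing over $i$ then yields the bound $\alpha_1$.

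The main obstacle is that $T_1$ depends on $P_1$ through both $R$ and $\prone$, so it cannot be treated as deterministic after conditioning on $P_2$. The key simplification is the step-up identity $T_1=R\alpha_1/(S_{2,\lambda}\prone)$ itself, which cancels $T_1$ against $R$ in the ratio $V_1/R$ and leaves only $\prone$ in the denominator; this converts the FDR calculation into essentially the same binomial expectation that already controls the adaptive Bonferroni FWER in Theorem \ref{thm-bonferroniadapt}. The residual coupling with study two through $\prtwo$ only makes $\{i\in\mathcal R\}$ more restrictive than $\{P_{1i}\le T_1\}$, so it can be harmlessly discarded in the upper bound.
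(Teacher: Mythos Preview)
Your overall strategy matches the paper's: both decompose $V\le V_1+V_2$, condition on the other study's $p$-values, and combine a step-up leave-one-out with the plug-in/binomial bound from Theorem~\ref{thm-bonferroniadapt}. The paper packages the step-up leave-one-out into Lemma~\ref{lem_fdr}, whose condition~(\ref{wj1}) it verifies with $W_1=S_{2,\lambda}\prone$ and the leave-one-out weight $W_1^{(j)}=(1+\sum_{i\in\mathcal{S}_{2,\lambda},\,i\ne j}\textbf{I}(P_{1i}>\lambda))/(1-\lambda)$, using exactly the binomial inequality you cite.

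There is, however, a genuine gap in your execution. Your displayed bound is, after the identity $T_1 S_{2,\lambda}\prone=R\alpha_1$, nothing more than $V_1/R\le\sum_i(1-H_{1i})\textbf{I}[P_{1i}\le T_1]/R$; the ``cancellation'' leaves $R$, not $\prone$, in the denominator. If you now treat $T_1$ as a separate quantity and replace $\prone$ by its lower bound, you obtain $\alpha_1(1-\lambda)\textbf{I}[P_{1i}\le T_1]/\{T_1(1+Y_i)\}$, and the remaining task is to control $E[\textbf{I}[P_{1i}\le T_1]/T_1\mid P_1^{(i)},P_2]$. But $T_1$ still depends on $P_{1i}$ through $R$, and---crucially---you have already relaxed $\{i\in\mathcal R\}$ to $\{P_{1i}\le T_1\}$, discarding the event that licenses the step-up leave-one-out replacement $T_1\to T_1^{(i)}$. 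On $\{P_{1i}\le T_1\}\setminus\{i\in\mathcal R\}$ (e.g.\ when $i\notin\mathcal{S}_{1,\lambda}$ or $P_{2i}>T_2$), there is no reason the random ratio $\textbf{I}[P_{1i}\le T_1]/T_1$ integrates to at most $1$. The paper avoids this by decomposing over the events $C_k^{(j)}$ of Lemma~\ref{lem_fdr}, defined on $(P_1^{(j)},P_2^{(j)})$: on each piece the study-one threshold is $k\alpha_1/W_1^{(j)}$, independent of $P_{1j}$, so one integrates $P_{1j}$ out first, then collapses $\sum_k\textbf{I}[C_k^{(j)}]=1$, leaving $\alpha_1\,E[1/W_1^{(j)}\mid P_2]$ to which the binomial bound applies. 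The correct order is therefore to perform the leave-one-out on the step-up threshold \emph{before} isolating $\prone$, not after.
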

See Appendix \ref{app-thm-fdr-indep} for a proof.

\section{Directional replicability analysis for two-sided alternatives}\label{sec-twosided}

So far we have considered one sided alternatives. If a two-sided alternative is considered for each feature in each study, and the aim is to discover the features that have replicated effect in the same direction in both studies, the following simple modifications are necessary.

For feature $j\in \{1,\ldots,m\}$, the left- and right- sided
$p$-values for study $i\in \{1,2\}$ are denoted by $p^L_{ij}$ and
$p^R_{ij}$, respectively. For continuous test statistics, $p^R_{ij}
= 1-p^L_{ij}$. 

For directional replicability analysis, the selection step has to be modified to include also the selection of the direction of testing. The set of
features selected is the
subset of features that are selected from both studies, for which the
direction of the alternative with the smallest one-sided $p$-value
is the same for both studies, i.e.,
$$\mathcal{S}\triangleq\mathcal{S}_1\cap\mathcal{S}_2\cap\left(\{j: \max(p_{1j}^R,
p_{2j}^R)<0.5\}\cup\{j: \max(p_{1j}^L, p_{2j}^L)<0.5\} \right).$$ In
addition, define for $j\in \mathcal{S}_1\cup\mathcal{S}_2,$
\begin{equation*}
p'_{1j} = \left\{
\begin{array}{rl}
p_{1j}^L & \text{if }  p_{2j}^L< p_{2j}^R,\\
p_{1j}^R & \text{if }  p_{2j}^L> p_{2j}^R,
\end{array} \right.
\end{equation*}

\begin{equation*}
p'_{2j} = \left\{
\begin{array}{rl}
p_{2j}^L & \text{if }  p_{1j}^L< p_{1j}^R,\\
p_{2j}^R & \text{if }  p_{1j}^L> p_{1j}^R.
\end{array} \right.
\end{equation*}

The Bonferroni and FDR $r$-values are computed for features in
$\mathcal{S}$ using the formulae given in Sections
\ref{sec-non-adapt} and \ref{sec-adapt} (where 
$\mathcal{S}_{1,\lambda}$ and $ \mathcal{S}_{2,\lambda}$ are the selected sets in Section \ref{sec-adapt}), with the following
modifications: the set $\mathcal{S}_1\cap \mathcal{S}_2$ is replaced
by $\mathcal{S}$, and $p_{1j}$ and $p_{2j}$ are replaced by
$p'_{1j}$ and $p'_{2j}$ for $j\in
\mathcal{S}_1\cup\mathcal{S}_2.$

As in Sections \ref{sec-non-adapt} and \ref{sec-adapt},  features with  $r$-values at most $\alpha$ are declared
as replicated at level $\alpha$, in the direction selected.  
The corresponding procedures remain valid, with the
theoretical guarantees of directional FWER and FDR control for
replicability analysis on the modified selected set above, despite
the fact that the direction of the alternative for establishing
replicability was not known in advance. This is remarkable, since it
means that there is no additional penalty, beyond the penalty for
selection used already in the above procedures, for the fact that
the direction for establishing replicability is also decided upon
selection. The proofs are similar to the proofs provided for one-sided hypotheses and are therefore omitted.

\section{Estimating the selection thresholds}\label{sec-estthresholds}
When the full data for both studies is available, we first need to
select the promising features from each study based on the data in
this study. If the selection is based on $p$-values, then our first
step will include selecting the features with $p$-values below
thresholds $t_1$ and $t_2$ for studies one and two, respectively. The
thresholds for selection, $(t_1,t_2)\in (0,1]^2$, affect power: if
$(t_1,t_2)$ are too low, features with $\vec H_j\notin
\mathcal{H}^0_{NR}$ may not be considered for replicability even if
they have a chance of being discovered upon selection, thus
resulting in power loss; if $(t_1,t_2)$ are too high,  too many
features with $\vec H_j\in \mathcal{H}^0_{NR}$ will be considered
for replicability making it difficult to discover the true
replicated findings, thus resulting in power loss.

We suggest automated methods for choosing $(t_1,t_2)$, based on
$(p_1,p_2)$ and the level of FWER or FDR control desired,  which are
based on the following principle: choose the values $(t_1,t_2)$  so
that the set of discovered features coincides with the set of
selected features.  We show in simulations in Section \ref{sec-sim}
that  data-dependent thresholds
 may lead to more powerful procedures than procedures with a-priori fixed
thresholds.

Let $\mathcal{S}_i(t_i)=\{j:p_{ij}\leq t_i\}$ be the index set of
selected features from study $i,$ for $i\in\{1,2\}.$ We suggest the
selection thresholds $(t^*_1,t^*_2)$ that solve the two equations
 \begin{equation}\label{eq-nonlin}
t_1 = \frac{\alpha_1}{|\mathcal{S}_2(t_2)|};  \quad t_2 =
\frac{\alpha-\alpha_1}{|\mathcal{S}_1(t_1)|},
\end{equation}
for Procedure \ref{proc-FWER} using Bonferroni, and the selection
thresholds $(t^*_1,t^*_2)$ that solve the two equations
 \begin{equation}\label{sel-fwer-adapt}
t_1 = \frac{\alpha_1}{|\mathcal{S}_{2,\lambda}(t_2)|\prone(t_2)};
\quad t_2 = \frac{\alpha-\alpha_1}{|\mathcal{S}_{1,
\lambda}(t_1)|\prtwo(t_1)},
\end{equation}
for the adaptive Procedure \ref{proc-bonferroniadapt} for FWER
control, where $\prone(t_2)$ and $\prtwo(t_1)$ are the estimators
defined in (\ref{eq-adaptiveestimates}) with
$\mathcal{S}_1=\mathcal{S}_{1,\lambda}(t_1)=\{j:P_{1j}\leq\min(\lambda,
t_1)\}$ and
$\mathcal{S}_2=\mathcal{S}_{2,\lambda}(t_2)=\{j:P_{2j}\leq\min(\lambda,
t_2)\}$. We show in Appendix \ref{app-estthresholdsTheoretical}
 that these choices are not
dominated by any other choices (i.e., there do not exist other
choices $(t_1,t_2)$ that result in larger rejection thresholds for
the $p$-values in both studies).

Similarly, we suggest the selection
thresholds $(t^*_1,t^*_2)$ that solve the two equations
\begin{equation}
t_1=\frac{|\mathcal{S}_1(t_1)\cap
\mathcal{S}_2(t_2)|\alpha_1}{|\mathcal{S}_2(t_2)|}; \quad
t_2=\frac{|\mathcal{S}_1(t_1)\cap
\mathcal{S}_2(t_2)|(\alpha-\alpha_1)}{|\mathcal{S}_1(t_1)|},\label{sel-fdr}
\end{equation}
for Procedure \ref{procfdrsym} for FDR control, and the selection
thresholds $(t^*_1,t^*_2)$ that solve the two equations
\begin{align}
&t_1=\frac{|\mathcal{S}_{1,\lambda}(t_1)\cap \mathcal{S}_{2,\lambda}(t_2)|\alpha_1}{|\mathcal{S}_{2,\lambda}(t_2)|\prone(t_2)},\notag\\
&t_2=\frac{|\mathcal{S}_{1,\lambda}(t_1)\cap
\mathcal{S}_{2,\lambda}(t_2)|(\alpha-\alpha_1)}{|\mathcal{S}_{1,\lambda}(t_1)|\prtwo(t_1)}.\label{sel-fdr-adapt}
\end{align}
for the adaptive FDR-controlling procedure in Section
\ref{sec-adapt}.

If the solution does not exist, no replicability claims are made.
There may be more than one solution to the equations
(\ref{eq-nonlin}) - (\ref{sel-fdr-adapt}). 
In our simulations and real data examples, we set as  $(t_1^*, t_2^*)$ the first solution outputted from the  algorithm used for solving the system of non-linear
equations.  We show in simulations that  using data-dependent thresholds $(t_1^*,t_2^*)$ results in  power close to the power using the optimal (yet unknown) fixed thresholds
$t_1=t_2$, and that  the nominal level of FWER/FDR is maintained under independence as well as under dependence  as long as we  use $\lambda=\alpha$ for the adaptive procedures.
 We prove in Appendix \ref{app-estthresholdsTheoretical} that  the nominal level of the non-adaptive procedures is controlled even though the selection thresholds $(t^*_1,t^*_2)$ are data-dependent, if the $p$-values are exchangeable under the null.

\section{Simulations}\label{sec-sim}
We define the configuration vector  $\vec f = (f_{00},f_{10},f_{01},f_{11})$, where $f_{lk}=\sum_{j=1}^mI[\vec{H}_j=(l,k)]/m,$ the proportion of features with state $(l,k)$, for $(l,k)\in\{0,1\}$. Given $\vec f$, measurements for $mf_{lk}$ features,  were generated from
$N(\mu_l,1)$ for study one, and $N(\mu_k,1)$ for study two, where
$\mu_0=0$ and $\mu_1=\mu>0$. One-sided $p$-values were computed for
each feature in each study. We varied $\vec f$ and $\mu \in
\{2,2.5,\ldots,6\}$ across simulations. We also examined the effect
of dependence within each study on the suggested procedures, by
allowing for equi-correlated test statistics within each study, with
correlation $\rho = \{0,0.25,0.5,0.75, 0.95\}$. Specifically, the
noise for feature $j$ in study $i\in \{1,2\}$ was
$e_{ij}=\sqrt{\rho}Z_{i0}+\sqrt{1-\rho}Z_{ij}$, where $\{Z_{ij}:
i=1,2, j=1,\ldots,m\}$ are independent identically distributed
$N(0,1)$ and $Z_{i0}$ is $N(0,1)$ random variable that is
independent of $\{Z_{ij}: i=1,2, j=1,\ldots,m\}$. The $p$-value for
feature $j$ in study $i$ was $1-\Phi(\mu_{ij}+e_{ij})$, where
$\Phi(\cdot)$ is the CDF of the standard normal distribution, and
$\mu_{ij}$ is the expectation for the signal of feature $j$ in study
$i$.

Our goal in this simulation was three-fold: First, to show the advantage of the adaptive procedures over the non-adaptive procedures for replicability analysis; Second, to examine the behaviour of the adaptive procedures when the test statistics are dependent within studies; Third, to compare the novel procedures with alternatives suggested in the literature.
The power, FWER, and FDR for the replicability analysis procedures considered were estimated based on 5000 simulated datasets.

\subsection{Results for FWER controlling procedures}\label{sec-fwersim}
We considered the following novel procedures:
Bonferroni-replicability with fixed or with data-dependent selection
thresholds $(t_1,t_2)$, adaptive-Bonferroni-replicability with
$\lambda \in \{0.05,0.5\}$ and with fixed or with data-dependent
$(t_1,t_2)$. These procedures were compared to an oracle procedure
with data-dependent thresholds (oracle-Bonferroni-replicability),
that knows $\sum_{j \in \mathcal S_2(t_2)}(1-H_{1j})$ and $\sum_{j
\in \mathcal S_1(t_1)}(1-H_{2j})$ and therefore rejects a feature
$j\in \mathcal{S}_1(t_1)\cap\mathcal{S}_2(t_2)$ if and only if
$p_{1j}\leq \alpha_1/\sum_{j \in \mathcal S_2(t_2)}(1-H_{1j})$ and
$p_{2j}\leq (\alpha-\alpha_1)/\sum_{j \in \mathcal
S_1(t_1)}(1-H_{2j})$. In addition, two procedures based on the
maximum of the two studies $p$-values were considered: the procedure
that declares as replicated all features with
$\max(p_{1i},p_{2i})\leq \alpha/m$ (Max), and the equivalent oracle
that knows $|\{j: \vec H_j\in \mathcal H_{NR}^0 \}|$ and therefore
declares as replicated all features with $\max(p_{1i},p_{2i})\leq
\alpha/|\{j: \vec H_j\in \mathcal H_{NR}^0 \}|$ (oracle Max). Note
that the oracle Max procedure controls the FWER for replicability analysis at the nominal level
$\alpha$ since the FWER is at most $\sum_{\{i: \vec H_i  \in
\mathcal H_{NR}^0 \}}\textmd{Pr}(\max(p_{1i},p_{2i})\leq
\alpha/|\{j: \vec H_j\in \mathcal H_{NR}^0 \}|\}\leq \alpha$.

Figure \ref{fig:fwerFixedThresholds}  shows the power for various
fixed selection thresholds $t_1=t_2=t$. There is a
clear gain from adaptivity since the power curves for the adaptive
procedures are above those for the non-adaptive procedures, for the
same fixed threshold $t.$ The gain
from adaptivity is larger as the difference between
$f_{11}$ and $f_{10} =f_{01}$ is larger: while in the last two rows
(where $f_{10} =f_{01}<f_{11}$) the power advantage can be greater than 10\%, in the first row (where $f_{10} =f_{01}=0.1, f_{11}=0.05$)
there is almost no power advantage. The  choice of  $t$ matters, and
 the power of the procedures with data-dependent
thresholds $(t_1^*, t_2^*)$ is close to the power of the procedures
with the best possible fixed threshold $t.$  

Figure \ref{fig-fwerproc} shows the power and FWER versus $\mu$
under independence (columns 1 and 2) and under equi-correlation of
the test statistics with $\rho=0.25$ (columns 3 and 4).  
The novel procedures are clearly superior to the Max and Oracle Max procedures, the adaptive procedures are superior to the non-adaptive variants, and the power of the adaptive procedures with  data-dependent thresholds is close to that of the oracle Bonferroni procedure. 
The adaptive procedures with $\lambda=0.05$
and  $\lambda=0.5$ have similar power, but the FWER with
$\lambda=0.05$ is controlled in all dependence settings while the FWER
with $\lambda=0.5$ is above 0.1 in all but the last dependence
setting.  Our results concur with the results of \cite{Blanchard09} for single
studies, that the preferred parameter is $\lambda=0.05$. The
adaptive procedure with $\lambda = 0.05$ and data-dependent
selection thresholds is clearly superior to the two adaptive
procedures with fixed selection thresholds of $t_1=t_2=0.025$ or
$t_1=t_2=0.049$. We thus recommend the
adaptive-Bonferroni-replicability procedure with $\lambda =0.05$ and
data-dependent selection thresholds.

\begin{figure}[htbp]
    \centering
\includegraphics[width =17.5cm,height = 22.5cm]{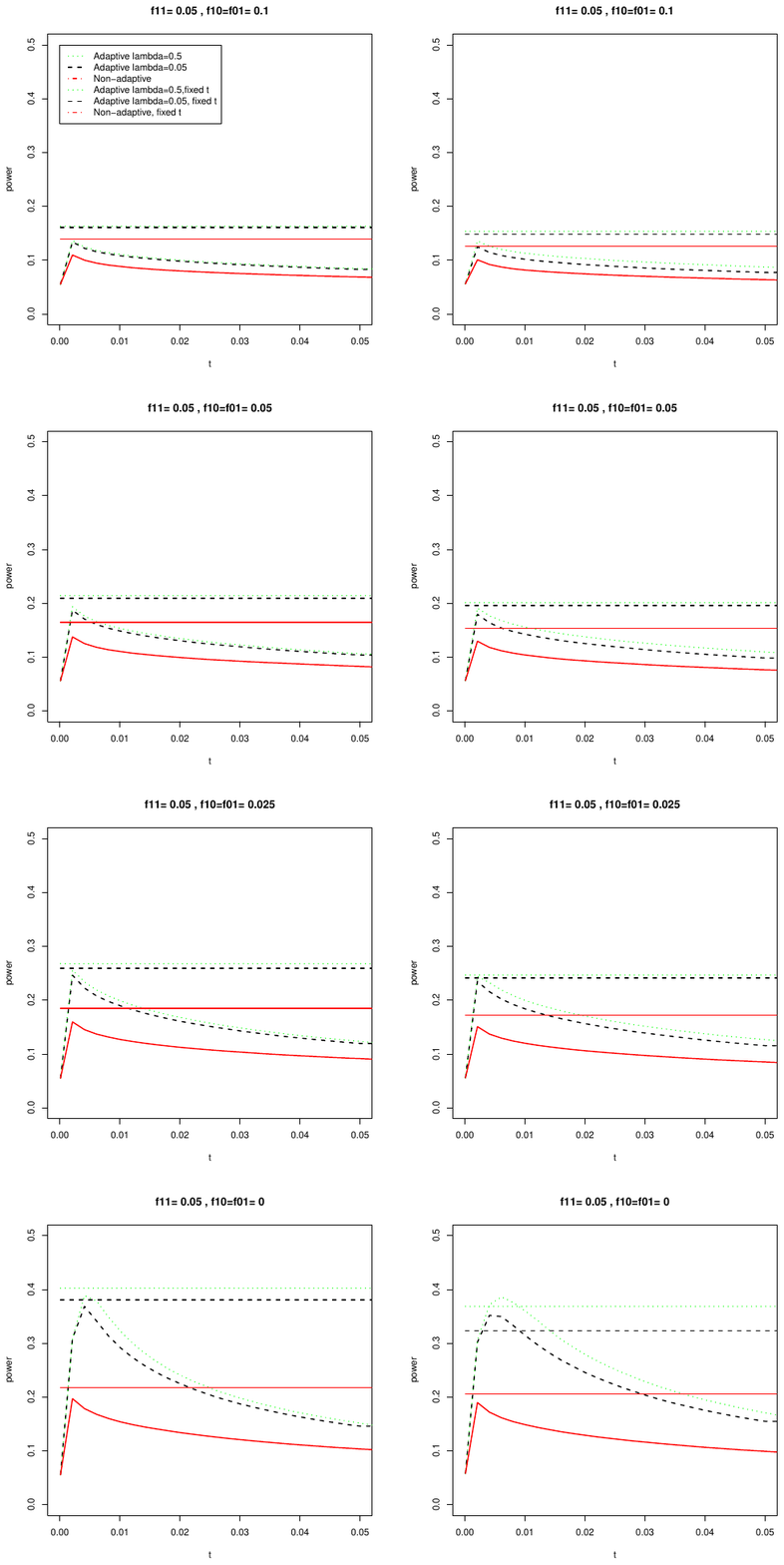}
\caption{ Column 1: Independence setting; Columns 2: Equicorrelation
with $\rho=0.25$. The power versus fixed threshold $t$ is shown for
$\mu = 3$ for the adaptive-Bonferroni-replicability procedure
(dashed black with $\lambda =0.05$ and dotted green with $\lambda =
0.5$) and non-adaptive Bonferroni-replicability procedure (solid
red), along with the power of these procedures with data-dependent
thresholds. In all settings $m=1000$, $\alpha=0.05, \alpha_1=0.025$.
}\label{fig:fwerFixedThresholds}
\end{figure}

\begin{center}
\begin{figure}[htbp]
    \centering
  \includegraphics[width =17.5cm,height = 22.5cm]{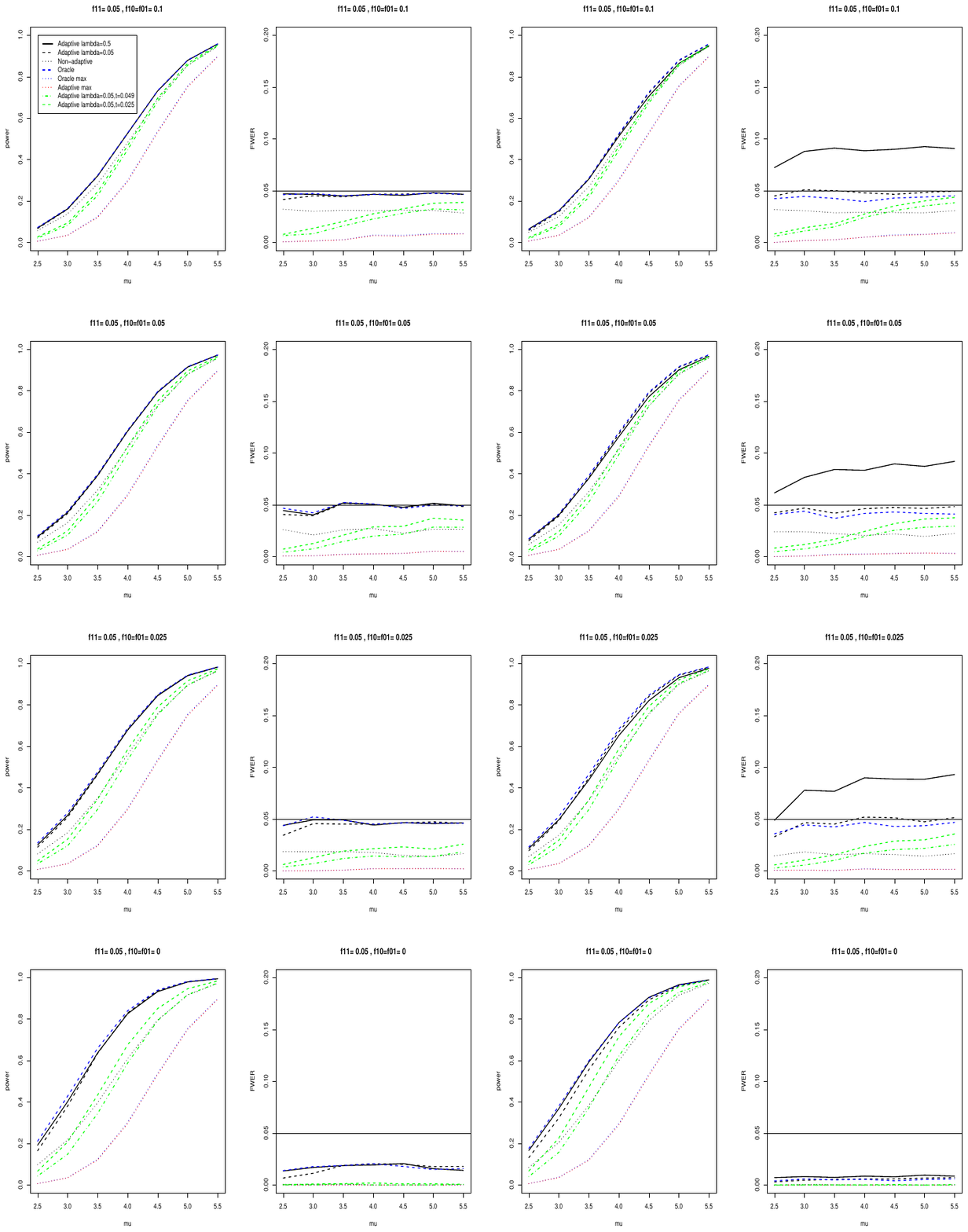}
    \caption{\scriptsize{Columns 1 and 2: Independence setting; Columns 3 and 4: Equi-correlation with $\rho=0.25$.
    Power and FWER versus $\mu$ for the adaptive-Bonferroni-replicability procedure with data-dependent $(t_1,t_2)$
    with $\lambda=0.5$ (solid black ) and with $\lambda=0.05$ (dashed black); Bonferroni-replicability procedure with data-dependent $(t_1,t_2)$
    (dotted black); the oracle that knows which hypotheses are null in one study among the selected from the other study (dashed blue);
    oracle Max (dotted blue)  and Max (dotted red); adaptive-Bonferroni-replicability with fixed $\lambda=0.05$ and fixed $t_1=t_2=0.049$
    (dash-dot green) and fixed $t_1=t_2=0.025$ (dash green).   In all settings $m=1000$, $\alpha=0.05, \alpha_1=0.025$. %The power and FWER  wer estimated from 5000 simulated datasets.
    }}
    \label{fig-fwerproc}
\end{figure}
\end{center}

\subsection{Results for FDR controlling procedures}\label{sec-fdrsim}
We considered the following novel procedures for replicability
analysis with $\alpha=0.05, \alpha_1 = 0.025$:
Non-adaptive-FDR-replicability with fixed or data-dependent
$(t_1,t_2)$; adaptive-FDR-replicability with $\lambda \in \{0.05,
0.5\}$ and fixed or data-dependent $(t_1,t_2)$.

\cite{Heller14} introduced the oracle Bayes procedure (oracleBayes), %that knows for each $j$ whether it belongs to $I_{00}, I_{01}, I_{10}$, or $I_{11}$, as well as the null and non-null densities of the $z$-scores $z_{ij} = \Phi^{-1}(1-p_{ij})$, for $i=1,2$ and $j=1,\ldots,m$,
and showed that it has the largest rejection region while
controlling the Bayes FDR. When  $m$ is large and the data is
generated from the mixture model, the Bayes FDR coincides with the
frequentist FDR, so oracle Bayes is optimal for FDR control.
We considered this oracle procedure for comparison with our novel procedures.
The difference in power between the oracle Bayes and the novel frequentist procedures 
shows how much worse our procedures, which
make no mixture-model assumptions,  are from the (best yet unknown in practice) oracle
procedure, which assumes the mixture model and needs as input its parameters.
 In addition, the following three procedures were considered: the empirical Bayes procedure (eBayes),
 as implemented in the R package \emph{repfdr} \citep{Heller14c},  which estimates the Bayes FDR and rejects the features with estimated Bayes FDR below $\alpha$,
  see \cite{Heller14} for details; the oracle BH on $\{\max(p_{1i}, p_{2i}): i=1,\ldots, m\}$ (oracleMax); and the adaptive BH on $\{\max(p_{1i}, p_{2i}): i=1,\ldots, m\}$ (adaptiveMax).
Specifics about oracleMax and adaptiveMax follow. Applying the BH on $\{\max(p_{1i}, p_{2i}): i=1,\ldots, m\}$ at level $x$,
it is easy to show that the FDR level for independent features is at most $f_{00}x^2+(1-f_{00}-f_{11})x$.
Therefore, the oracleMax procedure uses level $x$, which is the solution to $f_{00}x^2+(1-f_{00}-f_{11})x=0.05$,
and the adaptiveMax procedure uses level  $x$, which is the solution to $\hat f_{00}x^2+(1-\hat f_{00}-\hat f_{11})x=0.05$, where $\hat f_{00}$ and $\hat f_{11}$ are the estimated mixture fractions computed using the R package \emph{repfdr}.

Figure \ref{fig:fdrFixedThresholds}  shows the power of novel
procedures for various fixed selection thresholds $t_1=t_2=t$, as
well as for the variants with data-dependent thresholds. There is a
clear gain from adaptivity since the power curves for the adaptive
procedures  are above those for the non-adaptive procedures, for the
same fixed threshold $t$. The choice
of $t$ matters, and the choice $t=0.025$ is better than the choice
$t=0.05$, and fairly close to the best $t$.  We see that the power of the
non-adaptive procedures with data-dependent selection thresholds is
superior to the power of non-adaptive procedures with fixed thresholds. The same
is true for the adaptive procedures in all the settings except for
the last two rows of the equi-correlation setting, where the power
of the adaptive procedures with data-dependent thresholds is
slightly lower than the highest power for fixed thresholds
$t_1=t_2=t.$ In these settings the number of selected hypotheses is
on average lower than in other settings, and the fractions of true
null hypotheses in one study among the selected in the other study
are expected to be small. As a result, the solutions to the two non-linear equations solved using the estimates of the fractions of nulls  are far from optimal. 
Therefore, when there is dependence within each study, and the number of selected hypotheses is small (say less than 100 per study), we suggest using the novel adaptive procedures with $t_1=t_2=\alpha/2$ instead of using  data-dependent $(t_1,t_2)$.

Figure \ref{fig-fdrproc} shows the power and FDR versus $\mu$ under
independence (columns 1 and 2) and under equi-correlation of the
test statistics with $\rho=0.25$ (columns 3 and 4).  The novel procedures are clearly superior to the
competitors: the empirical Bayes procedure does not control the FDR
when $m=1000$, and the actual level reaches above 0.1 under
dependence; the oracleMax and adaptiveMax procedures have the lowest power
in almost all settings. The novel adaptive procedures approach the power
of the oracle Bayes as $f_{10}=f_{01}$ increase. The adaptive
procedures with $\lambda=0.05$ and  $\lambda=0.5$ have similar
power, but the FDR with $\lambda=0.05$ is controlled in all
dependence settings and the FDR with $\lambda=0.5$ is above the
nominal level in three of the dependence settings. Our results concur
with the results of \cite{Blanchard09} for single studies, that the
preferred parameter is $\lambda=0.05$. We thus recommend the
adaptive FDR-replicability procedure with $\lambda =\alpha$, for FDR control at level $\alpha$. We also recommend using  
data-dependent $(t_1,t_2)$, unless the test statistics are dependent within each study and the number of selected hypotheses from each study is expected to be small.

\begin{figure}[htbp]
    \centering
    \includegraphics[width =17.5cm,height = 22.5cm]{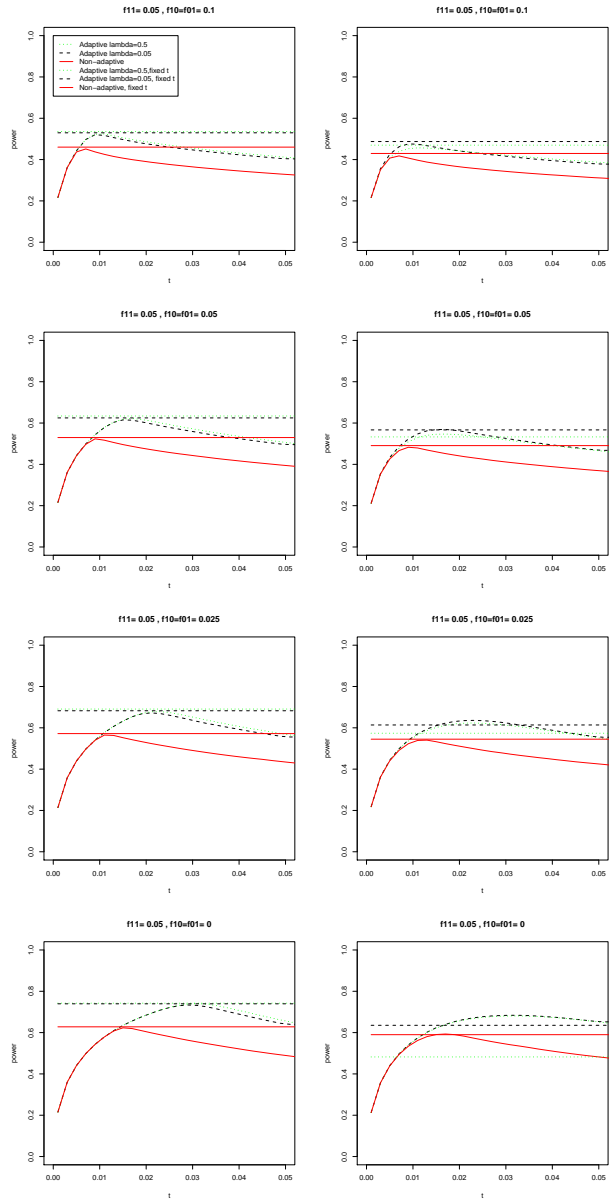}
\caption{Column 1: Independence setting; Columns 2: Equi-correlation
with $\rho=0.25$. The power versus fixed threshold $t$ is shown for
$\mu = 3$ for the adaptive and non-adaptive FDR-replicability
procedures, along with the power of these procedures with
data-dependent thresholds. In all settings $m=1000$, $\alpha=0.05,
\alpha_1=0.025$. }\label{fig:fdrFixedThresholds}
\end{figure}

\begin{figure}[htbp]
  %  \centering
  %  \begin{tabular}{c}
    \includegraphics[width =17.5cm,height = 22.5cm]{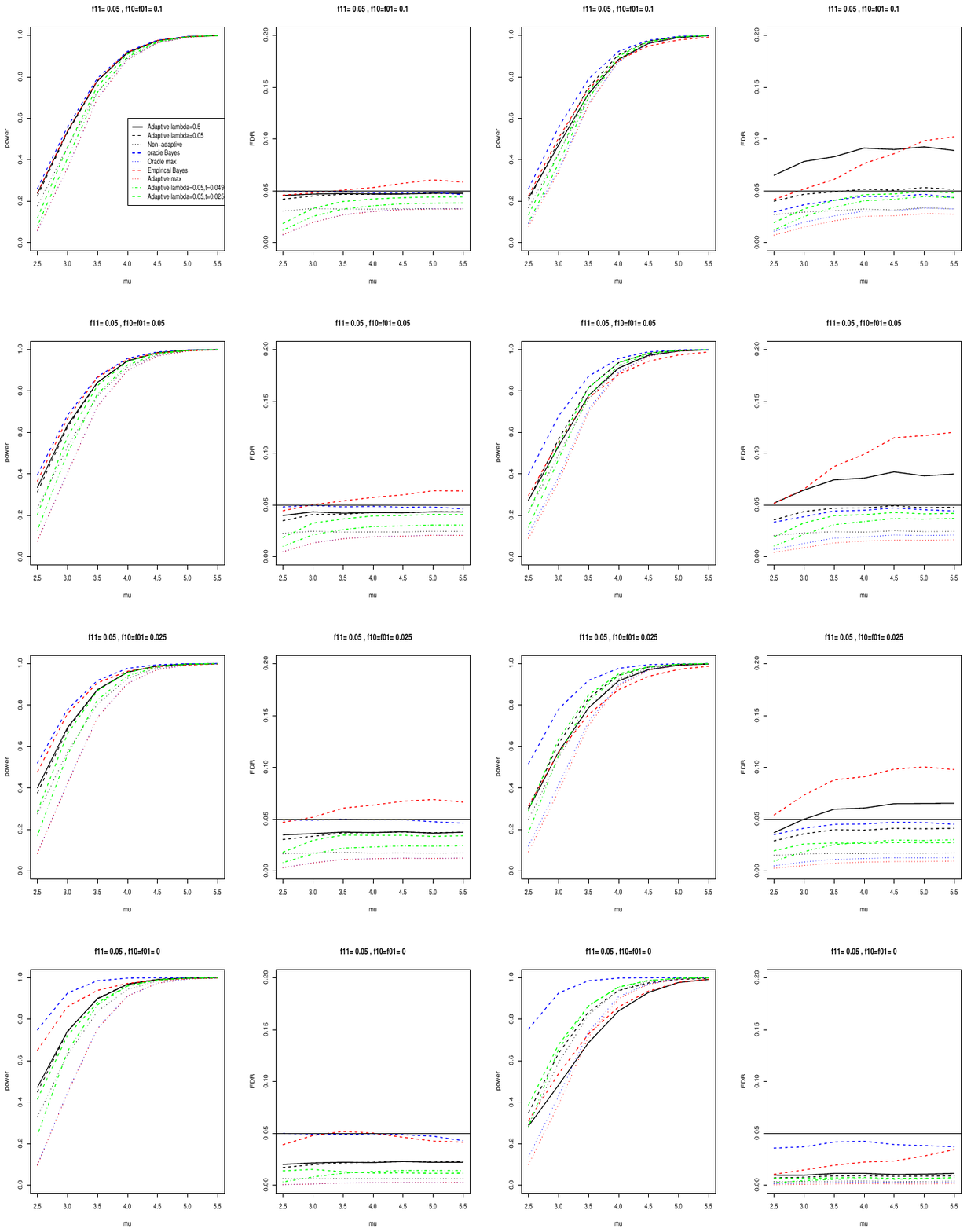}
   % \end{tabular}
        \caption{\scriptsize{Columns 1 and 2: Independence setting;
    Columns 3 and 4: Equi-correlation with $\rho=0.25$.
    Power and FDR versus $\mu$ for the adaptive-FDR-replicability procedure with data-dependent $(t_1,t_2)$ with $\lambda=0.5$ (solid black) and with $\lambda=0.05$ (dashed black); Non-adaptive-FDR-replicability procedure with data-dependent $(t_1,t_2)$ (dotted black);
    the oracle Bayes (dashed blue) and empirical Bayes (dashed red); the oracle and adaptive BH on maximum $p$-value,  (dotted blue and dotted red);
    adaptive-FDR-replicability procedure with $\lambda=0.05$ and fixed $t_1=t_2=0.049$ (dash-dot green) and fixed $t_1=t_2=0.025$ (dash green).
    In all settings $m=1000$, $\alpha=0.05, \alpha_1=0.025$.
    }}
    \label{fig-fdrproc}
\end{figure}

\section{Examples}\label{sec-example}
\subsection{Laboratory mice studies comparing behaviour across strains}
%%%%PROBLEM WITH TABLE LABELS%%%%%%%%%%%%%%%%%%%%%%%%%%%%%%%%%%%%%%%%%%%%%%%%%
It is well documented that in different laboratories, the comparison
of behaviors of the same two strains may lead to opposite
conclusions that are both statistically significant
(\cite{Crabbe99}, \cite{kafkafi05}, and Chapter 4 in
\cite{Crusio13}). An explanation may be the different laboratory
environment (i.e. personnel, equipment, measurement techniques)
affecting differently the study strains (i.e. an interaction of
strain with laboratory).  \cite{richter11} examined 29 behavioral
measures from five commonly used behavioral tests (the barrier test,
the vertical pole test, the elevated zero maze, the open field test,
and the novel object test) on female mice from different strains in
different laboratories with standardized conditions. Table 1
 shows the one-sided $p$-value in the direction
favored by the data based on the comparison of two strains in two
laboratories, for each of the 29 outcomes.

\begin{center}
\begin{table}
\caption{\scriptsize{For 16 female mice from each of two inbred
strains, " C57BL6NCrl" and  "DBA/2NCrl", in each of two
laboratories, the Wilcoxon rank sum test one-sided $p$-value was
computed for the test of no association between strain and
behavioral endpoint. We show the $p$-values for the lab of H. Wurbel
at the University of Giessen in column 3, and for the lab of P. Gass
at the  Central Institute of Mental Health, Mannheim  in column 4.
The direction of the alternative favored by the data is shown in
column 2, and it is marked as "X" if the laboratories differ in the
direction of smallest one-sided $p$-value. The rows are the outcomes from 5
behavioural tests: the barrier test (row 1); the vertical pole test
(row 2); the elevated zero maze (rows 3-11) ; the open field test
(rows 12-19); the novel object test (rows
20-29).}}\label{tab-mice-pv} \centering
  \begin{tabular}{cc}
  \scriptsize

\centering
\begin{tabular}{cccc}
  \hline
  & & \multicolumn{2}{c}{$\min(P_{ij}^L, P_{ij}^R)$}\\
 & Alternative & $i=1$ &  $i=2$ \\
  \hline
1 & X & 0.3161 & 0.0218 \\
  2 & $C57BL<DBA$ & 0.0012 & 0.0000 \\
  3 & X & 0.0194 & 0.1120 \\
  4 & $C57BL<DBA$ & 0.0095 & 0.2948 \\
  5 & $C57BL<DBA$ & 0.1326 & 0.0028 \\
  6 & $C57BL>DBA$ & 0.1488 & 0.0003 \\
  7 & $C57BL>DBA$ & 0.2248 & 0.0000 \\
  8 & X & 0.4519 & 0.0005 \\
  9 & $C57BL<DBA$ & 0.0061 & 0.0000 \\
  10 & $C57BL<DBA$ & 0.0071 & 0.0888 \\
  11 & X & 0.4297 & 0.1602 \\
  12 & $C57BL<DBA$ & 0.0918 & 0.0506 \\
  13 & X & 0.0918 & 0.0001 \\
  14 & $C57BL<DBA$ & 0.0000 & 0.0048 \\
  15 & X & 0.0005 & 0.0550 \\
   \hline
\end{tabular}
%\end{table}
&
%\begin{table}[ht]
%\centering
\scriptsize
\begin{tabular}{cccc}
  \hline
 & & \multicolumn{2}{c}{$\min(P_{ij}^L, P_{ij}^R)$}\\
 & Alternative & $i=1$ &  $i=2$ \\
  \hline

  16 & $C57BL<DBA$ & 0.0059 & 0.0002 \\
  17 & $C57BL>DBA$ & 0.0176 & 0.0003 \\
  18 & X & 0.0000 & 0.0538 \\
  19 & $C57BL<DBA$ & 0.0000 & 0.1727 \\
  20 & $C57BL<DBA$ & 0.0157 & 0.0001 \\
  21 & $C57BL<DBA$ & 0.0000 & 0.0234 \\
  22 & $C57BL<DBA$ & 0.3620 & 0.0176 \\
  23 & $C57BL<DBA$ & 0.0000 & 0.0001 \\
  24 & $C57BL<DBA$ & 0.0000 & 0.0076 \\
  25 & $C57BL<DBA$ & 0.0000 & 0.0000 \\
  26 & $C57BL<DBA$ & 0.0000 & 0.0003 \\
  27 & $C57BL<DBA$ & 0.0000 & 0.0001 \\
  28 & $C57BL<DBA$ & 0.0000 & 0.0550 \\
  29 & X & 0.0033 & 0.3760 \\
  & & & \\
   \hline
\end{tabular}
%\end{table}
\end{tabular}
\end{table}
\end{center}

The example is too small for considering the empirical Bayes
approach. The approach suggested in \cite{benjamini09} of using for
each feature the  maximum of the two studies $p$-values, i.e., $2\min \{\max(p_{1j}^L, p_{2j}^L), \max(p_{1j}^R,
p_{2j}^R) \}$, detected overall fewer outcomes than using our novel
procedures both for FWER and for FDR control.

Table 2 shows the FWER/FDR non-adaptive and adaptive $r$-values, for the selected features,
according to the rule which selects all features with two-sided
$p$-values that are at most 0.05. We did not consider data-dependent
thresholds since the number of features examined was only 29, which
could result in highly variable data-dependent thresholds and a
power loss comparing to procedures with fixed thresholds, as was
observed in simulations. At the $\alpha=0.05$ level, for FWER
control, four discoveries were made by using Bonferroni on the
maximum $p$-values, and five discoveries were made with the
non-adaptive and adaptive Bonferroni-replicability  procedures. At
the $\alpha=0.05$ level, for FDR control, nine discoveries were made
by using BH on the maximum $p$-values, and nine and twelve
discoveries were made with the non-adaptive FDR and adaptive
FDR-replicability procedures, respectively.  Note that the adaptive
$r$-values can be less than half the non-adaptive $r$-values, since
$\prone = 0.44$ and $\prtwo =0.47$.

\begin{center}
\begin{table}
%\resizebox{1.4\textwidth}{!}{\begin{minipage}{\textwidth}
\caption{\scriptsize{The replicability analysis results for the data
in Table 1, after selection of features with
two-sided $p$-values at most 0.05 (i.e. $t_1=t_2=0.025$). Only the
twelve features in $\mathcal S_1\cap \mathcal S_2$ are shown, where
$S_1=20$, $S_2=19$. For each selected feature, we show the
$r$-values based on Bonferroni (column 5),  FDR (column 6), adaptive
Bonferroni (column 7), and the adaptive FDR (column 8). The adaptive
procedures used $\lambda =0.05$.}} \label{tab:miceresults}
 \centering
 \scriptsize
\begin{tabular}{rrrrrrrrr}
  \hline
index & & \multicolumn{2}{c}{$\min(P_{ij}^L, P_{ij}^R)$}&\multicolumn{2}{c}{Non-adaptive}&\multicolumn{2}{c}{Adaptive}\\
selected & Alternative & $i=1$ &  $i=2$  & Bonf & FDR & Bonf & FDR \\
  \hline
 2 & $C57BL<DBA$ & 0.0012 & 0.0000 & 0.0452 & 0.0090 & 0.0200 & 0.0040 \\
   9 & $C57BL<DBA$ & 0.0061 & 0.0000 & 0.2323 & 0.0290 & 0.1029 & 0.0129 \\
   14 & $C57BL<DBA$ & 0.0000 & 0.0048 & 0.1910 & 0.0290 & 0.0905 & 0.0129 \\
   16 & $C57BL<DBA$ & 0.0059 & 0.0002 & 0.2237 & 0.0290 & 0.0992 & 0.0129 \\
   17 & $C57BL>DBA$ & 0.0176 & 0.0003 & 0.6679 & 0.0607 & 0.2960 & 0.0269 \\
   20 & $C57BL<DBA$ & 0.0157 & 0.0001 & 0.5974 & 0.0597 & 0.2648 & 0.0265 \\
   21 & $C57BL<DBA$ & 0.0000 & 0.0234 & 0.9363 & 0.0780 & 0.4435 & 0.0370 \\
   23 & $C57BL<DBA$& 0.0000 & 0.0001 & 0.0022 & 0.0011 & 0.0010 & 0.0005 \\
   24 & $C57BL<DBA$ & 0.0000 & 0.0076 & 0.3037 & 0.0337 & 0.1439 & 0.0160 \\
   25 & $C57BL<DBA$ & 0.0000 & 0.0000 & 0.0005 & 0.0005 & 0.0003 & 0.0003 \\
   26 & $C57BL<DBA$ & 0.0000 & 0.0003 & 0.0126 & 0.0032 & 0.0060 & 0.0015 \\
   27 & $C57BL<DBA$ & 0.0000 & 0.0001 & 0.0038 & 0.0013 & 0.0018 & 0.0006 \\
   \hline
%\label{tab:miceresults}
\end{tabular}
\end{table}
\end{center}

\subsection{Microarray studies comparing groups with different cancer
severity}\label{subsec-big-example} \cite{Freije04} and
\cite{Phillips06} compared independently the expression levels in
patients with grade $III$ and grade $IV$ brain cancer. Both studies
used the Affymetrix HG U133 oligonucleotide arrays, with 22283
probes in each study.
The study of \cite{Freije04} (GEO  accession GSE4412) included  26 subjects with tumors diagnosed as grade III glioma and 59 subjects with tumor diagnosis of grade IV glioma, all undergoing surgical treatment at the university of California, Los Angeles. %Patients with grade III tumors had a median survival of 4.8 years, whereas patients with grade IV tumors had a median survival of 293 days.
The study of \cite{Phillips06} (GEO accession GSE4271) included 24 grade III subjects, and 76 grade IV subjects, from the M.D. Anderson Cancer Center (MDA). %Tumor grade is the most established and robust predictor of disease
 The Wilcoxon rank sum test $p$-values were computed for each probe in each study in order to quantify the evidence against no association of probe measurement with tumor subgroup. %We will demonstrate how we can identify which probe sets correlate with the tumor subgroup in both studies, while controlling for false positives.

We used the R package \emph{repfdr} \citep{Heller14c} to get the following estimated fractions, among the 22283 probes:
0.39  with $\vec h = (0,0)$; 0.16 with $\vec h = (1,1)$; 0.13 with $\vec h = (-1,-1)$; 0.10 with $\vec h = (0,1)$; 0.08 with $\vec h = (-1,0)$; 0.07 with $\vec h = (0,-1)$; 0.07 with $\vec h = (1,0)$; 0.00 with $\vec h = (-1,1)$ or $\vec h = (1,-1)$.

For FWER-replicability, the recommended Procedure
\ref{proc-bonferroniadapt} with $\lambda=0.05$ and data-dependent
thresholds $t_1 = 6.5*10^{-5}, t_2 = 5.1*10^{-5}$ discovered 340
probes. For comparison, the non-adaptive and adaptive
Bonferroni-replicability procedure with fixed thresholds $t_1
=t_2=0.025$ discovered only 90 and 124 probes, respectively. The
Bonferroni on maximum $p$-values discovered only 47 probes.

For FDR-replicability, the recommended adaptive procedure in Section
\ref{sec-adapt} with $\lambda=0.05$ and data-dependent thresholds
$t_1 = 0.021, t_2 = 0.024$ discovered 3383 probes. For comparison,
the non-adaptive and adaptive FDR-replicability procedure with fixed
selection thresholds $t_1 =t_2=0.025$  discovered 2288 and 3299
probes, respectively. The adaptive $r$-values can be half the non-adaptive $r$-values, since 
$\hat \pi_0^I =0.51$ and $\hat \pi_0^{II} =0.49$.
Among the two competing approaches, the BH on
maximum $p$-values discovered only 1238 probes, and  the empirical
Bayes procedure discovered 4320 probes.  Among the 3383 probes
discovered by our approach, 3377 were also discovered by the
empirical Bayes procedure.

\section{Discussion}\label{sec-Discussion}
In this paper we proposed novel procedures for establishing
replicability in two studies. First, we introduced procedures that
take the selected set of features in each of two studies, and infer
about the replicability of features selected in both studies while
controlling for false replicability claims. We proved that the FWER
controlling procedure is valid (i.e., controls the error rate at the
desired nominal level) for any dependence within each study, and
that the FDR controlling procedure is valid under independence of
the test statistics within each study, and suggested also a more
conservative procedure that is valid for arbitrary dependence. %The sentence is long, maybe better two sentences?
Next, we suggested incorporating the plug-in estimates of the
fraction of nulls in one study among the selected  features by the
other study, which can be estimated as long as the $p$-values for
the union of features selected is available. We proved that the
resulting adaptive FWER and FDR controlling procedures are valid
under independence of the test statistics within each study. Our
empirical investigations showed that the adaptive procedures remain
valid even when the independence assumption is violated, as long as
we use $\lambda = \alpha$ as a parameter for the plug-in estimates,
as suggested by \cite{Blanchard09} for the adaptive BH procedure.
Finally, when two full studies are available that examine the same
features, we suggested selecting features for replicability analysis
that have $p$-values below certain thresholds. We showed that
selecting the features with one-sided $p$-values below $\alpha/2$ has good
power, but that the power can further be improved if we use
data-dependent thresholds, which receive the values that will lead
to the procedure selecting exactly the features that are discovered
as having replicated findings.

Our practical guidelines for establishing replicability are to use
the adaptive procedure for the desired error rate control, with
$\lambda = \alpha$. Moreover, based on the simulation results we
suggest using the data-dependent selection thresholds when two
full studies are available if the number of selected features in each study is
expected to be large enough (say above 100), and using the fixed thresholds $t_1=t_2=\alpha/2$ otherwise.
We would like to note that the $r$-value computation is more
involved when the thresholds are data-dependent, since these
thresholds depend on the nominal level $\alpha$.  An interesting
open question is how to account for multiple solutions of the two
non-linear equations that are solved in order to find the
data-dependent thresholds.

The suggested procedures can be generalized to the case that  more
than two studies are available. It is possible to either aggregate
multiple results of pairwise replicability analyses, or to first
aggregate the data and then apply a single replicability analysis on
two meta-analysis $p$-values. The aim of the replicability analysis
may also be redefined to be that of discovering features that have
replicated findings in at least $u$ studies, where $u$ can range
from two to the total number of studies. Other extensions include
weighting the features differently, as suggested by
\cite{Genovese06}, based on prior knowledge on the features, and
replicability analysis on multiple families of hypotheses while
controlling more general error rates, as suggested by
\cite{Benjamini13}.

\appendix

\section{Notation for technical derivations}\label{app-notation}

For the technical derivations, the following notation will be used.
Let $p_i$ be the $m$-dimensional vector of $p$-values for study $i,$
$\mathcal{S}_i(p_i)$ be the index set of features selected from
study $i$ based on the vector of $p$-values $p_i,$ and $S_i(p_i)$ be
the cardinality of this set, for $i\in\{1,2\}.$ Let $P_i^{(j)} =
(P_{i1}, \ldots, P_{i,j-1}, P_{i,j+1}, \ldots, P_{im})$ be the
vector of $p$-values for the $m-1$ features excluding $j$,  for
$i=1,2$. When the selection rule by which the set $\mathcal{S}_i$ is
selected is stable, define $\mathcal S_{i}^{(j)}\subseteq\{1,\ldots,
j-1, j+1,\ldots,m\}$ as the set of indices selected along with $j,$
if $j\in \mathcal{S}_i,$ and $\mathcal S_{i,\lambda}^{(j)}$ as
$\mathcal{S}_i^{(j)}\cap \{l\neq j: P_{il}\leq \lambda\}$ if $j\in
\mathcal{S}_{i, \lambda},$ for $i\in\{1,2\},$ and let $S_{i}^{(j)} =
|\mathcal S_{i}^{(j)}|$. Define $\mathcal
S_{i}^{(j)}(t_i)=\{l:p_{il}\leq t_i, l\neq j\}$ as the index set of
features with $p$-value at most $t_i$ from the vector of $p$-values
$p_i^{(j)}$, and let $S_{i}^{(j)}(t_i) =
|\mathcal S_{i}^{(j)}(t_i)|$. For $c\in(0,1),$ we write $\alpha_1=c\alpha$ and $\alpha_2 = \alpha-\alpha_1.$ %and $q_2 = q-q_1$.

\section{Proof of Theorems \ref{indep} and \ref{thm-proc-fdr-adaptive}}\label{app-thm-fdr-indep}
In the proofs of Theorems \ref{indep} and
\ref{thm-proc-fdr-adaptive} we use the following lemma. The lemma is
proven in the end of the section.
\begin{lemma}\label{lem_fdr}
Let $\mathcal{S}_i $ be the selected set of features based on study
$i$, for $i=1,2$. Let $r_j,\,\, j\in \mathcal{S}_1\cap
\mathcal{S}_2$ be the Bonferroni-type $r$-values:
\begin{align}
r_j=\max\left\{\frac{W_1p_{1j}}{c}, \frac{W_2p_{2j}}{1-c}\right\},
\,\,j\in \mathcal{S}_1\cap \mathcal{S}_2,
\end{align}
where $c\in(0,1)$ is a constant and $W_1, W_2$ may be constants or
random variables based on $p$-values. The FDR $r$-values
based on the Bonferroni-type $r$-values are:
$$ r^{FDR}_j = \min_{\{i:\, r_i\geq r_j, i \in \mathcal{S}_1\cap \mathcal{S}_2  \}} \frac{r_i}{rank(r_i)},\,\,j\in \mathcal{S}_1\cap \mathcal{S}_2,$$
where $rank(r_i)$ is the rank of the Bonferroni-type $r$-value for
feature $i \in \mathcal{S}_1\cap \mathcal{S}_2$, with maximum rank
for ties.

\begin{enumerate}
\item[(1)] The procedure that declares as replicated
the features with FDR r-values at most $\alpha$ is equivalent to the
following procedure on the selected features $\mathcal{S}_1\cap
\mathcal{S}_2$:
\begin{enumerate}
\item Let $$R\triangleq\max\left\{r:
\sum_{j\in\mathcal{S}_1\cap \mathcal{S}_2}\textbf{I}\left[(p_{1j},
p_{2j})\leq\left(\frac{r\alpha_1}{W_1},
\frac{r\alpha_2}{W_2}\right)\right] = r\right\}.$$ %where
%$\alpha_1=c\alpha.$
\item The set of indices with replicability claims is
\begin{align}\mathcal R= \{j: (p_{1j},
p_{2j})\leq\left(\frac{R\alpha_1}{W_1},
\frac{R\alpha_2}{W_2}\right), \,\,j \in \mathcal S_1\cap \mathcal
S_2\}.\notag\end{align}
\end{enumerate}
\item [(2)] The procedure that declares as replicated the features with FDR r-values at most $\alpha$ controls the FDR for
replicability analysis at level $\alpha$ if the following conditions
are satisfied:
\begin{enumerate}
\item The $p$-values corresponding to true null hypotheses are each independent of all the other $p$-values.
\item For each $j\in\{1,\ldots,m\},$ there exist random variables (or constants) $W_{1}^{(j)}, W_{2}^{(j)}$  defined on
the space $(P_1^{(j)}, P_2^{(j)})$ such that if $j\in
\mathcal{S}_1\cap \mathcal{S}_2,$ then $W_1=W_{1}^{(j)},$ $W_2=W_{2}^{(j)},$%for all $i\in
%\mathcal{S}_1^{(j)}\cap \mathcal{S}_2^{(j)},$ $r_i=r_i^{(j)},$ where
%$$r_i^{(j)}\triangleq\max\left\{\frac{W_1^{(j)}p_{1j}}{c},
%\frac{W_2^{(j)}p_{2j}}{1-c}\right\}, \,\,i\in
%\mathcal{S}_1^{(j)}\cap \mathcal{S}_2^{(j)},$$
and for arbitrary fixed vectors $p_1$%=(p_{11},\ldots, p_{1m}),$
and $p_2$ it holds:%=(p_{21}, \ldots, p_{2m}),$
\begin{align}
\textbf{I}[j\in
\mathcal{S}_2(p_2)]E(1/W_{1}^{(j)}\,|\,P_2=p_2)\leq\frac{1}{\sum_{j\in
\mathcal{S}_2(p_2)}(1-H_{1j})}\label{wj1}
\end{align}
\begin{align}
\textbf{I}[j\in
\mathcal{S}_1(p_1)]E(1/W_{2}^{(j)}\,|\,P_1=p_1)\leq\frac{1}{\sum_{j\in
\mathcal{S}_1(p_1)}(1-H_{2j})}.\label{wj2}
\end{align}
%where $\mathcal{S}_1(p_1)$ and $\mathcal{S}_2(p_2)$ are the selected
%features from study one and study two based on $P_1=p_1$ and
%$P_2=p_2,$ respectively.
\end{enumerate}
\end{enumerate}
\end{lemma}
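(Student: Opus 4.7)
The equivalence is the standard BH-style translation between the step-up and adjusted $p$-value formulations, applied to the Bonferroni-type $r_j$'s, which play the role of ``$p$-values on scale $\alpha$''. Note first that $r_j\leq r\alpha$ iff $(p_{1j},p_{2j})\leq(r\alpha_1/W_1,r\alpha_2/W_2)$, since $\alpha_1=c\alpha$ and $\alpha-\alpha_1=(1-c)\alpha$. Sort $r_{(1)}\leq\cdots\leq r_{(n)}$ with $n=|\mathcal S_1\cap\mathcal S_2|$; by the usual BH-adjusted $p$-value reasoning, features with $r^{FDR}_j\leq\alpha$ are exactly those of rank at most $R^\ast:=\max\{k:r_{(k)}\leq k\alpha\}$. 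Let $f(r):=|\{j\in\mathcal S_1\cap\mathcal S_2:(p_{1j},p_{2j})\leq(r\alpha_1/W_1,r\alpha_2/W_2)\}|$; then $r_{(k)}\leq k\alpha$ iff $f(k)\geq k$, so $R^\ast=\max\{r:f(r)\geq r\}$. Since $f$ is nondecreasing and $f(R^\ast+1)\leq R^\ast$ by maximality of $R^\ast$, one has $f(R^\ast)\leq R^\ast$; combined with $f(R^\ast)\geq R^\ast$ this gives $f(R^\ast)=R^\ast$ and hence $R^\ast=\max\{r:f(r)=r\}=R$ as in the procedure, and the rejection set described in the procedure equals $\{j\in\mathcal S_1\cap\mathcal S_2:r_j^{FDR}\leq\alpha\}$.

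\textbf{Part (2), setup.} Split false replicability claims by which study invalidates them: $|\mathcal R\cap\{j:\vec H_j\in\mathcal H_{NR}^0\}|\leq V_1+V_2$ with $V_i:=|\mathcal R\cap\{j:H_{ij}=0\}|$, so that $\mathrm{FDR}\leq E[V_1/\max(R,1)]+E[V_2/\max(R,1)]$. By symmetry it suffices to show $E[V_1/\max(R,1)]\leq\alpha_1$ (the bound $E[V_2/\max(R,1)]\leq\alpha_2$ is analogous, so $\mathrm{FDR}\leq\alpha$). Write $E[V_1/\max(R,1)]=\sum_{j:H_{1j}=0}E[\textbf{I}[j\in\mathcal R]/R]$, fix such a $j$, and condition on $P_2$. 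The summand vanishes unless $j\in\mathcal S_2(P_2)$, on which event $W_1=W_1^{(j)}$ and $W_2=W_2^{(j)}$ are functions of $(P_1^{(j)},P_2^{(j)})$ alone by the structural assumption on $W_i^{(j)}$.

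\textbf{Part (2), leave-one-out identity.} Define the ``leave-one-out'' step-up value
\[
R^{(j)}:=\max\!\left\{r:\#\!\left\{k\in\mathcal S_1^{(j)}\cap\mathcal S_2:(p_{1k},p_{2k})\leq(r\alpha_1/W_1^{(j)},r\alpha_2/W_2^{(j)})\right\}+1=r\right\},
\]
with $R^{(j)}=0$ if no such $r$ exists. Stability of the selection rule gives $\mathcal S_1^{(j)}=\mathcal S_1\setminus\{j\}$ when $j\in\mathcal S_1$, and makes $\mathcal S_1^{(j)}$ measurable with respect to $P_1^{(j)}$ alone; hence $R^{(j)}$ is measurable with respect to $(P_1^{(j)},P_2)$. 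On $\{j\in\mathcal R\}\subseteq\{j\in\mathcal S_1\cap\mathcal S_2\}$, the step-up equation for the original procedure counts exactly $R$ features meeting the joint threshold at level $R$, one of which is $j$; excluding $j$ leaves $R-1$, so $R^{(j)}=R$. Therefore
\[
\frac{\textbf{I}[j\in\mathcal R]}{R}=\frac{\textbf{I}\!\left[j\in\mathcal S_1\cap\mathcal S_2,\,p_{1j}\leq R^{(j)}\alpha_1/W_1^{(j)},\,p_{2j}\leq R^{(j)}\alpha_2/W_2^{(j)}\right]}{R^{(j)}}.
\]
Condition further on $P_1^{(j)}$: both $R^{(j)}$ and $W_1^{(j)}$ become constants, and $P_{1j}$ is independent of $(P_1^{(j)},P_2)$ by the null independence-across-studies condition and condition 2(a). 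Using $\Pr(P_{1j}\leq x)\leq x$ gives $E[\textbf{I}[j\in\mathcal R]/R\mid P_1^{(j)},P_2]\leq\alpha_1\textbf{I}[j\in\mathcal S_2(P_2)]/W_1^{(j)}$. Taking conditional expectation given $P_2$ and invoking (\ref{wj1}) yields $E[\textbf{I}[j\in\mathcal R]/R\mid P_2]\leq\alpha_1\textbf{I}[j\in\mathcal S_2(P_2)]/\sum_{k\in\mathcal S_2(P_2)}(1-H_{1k})$. Summing over $j$ with $H_{1j}=0$ the numerator becomes $\sum_{k\in\mathcal S_2(P_2)}(1-H_{1k})$, so the sum equals $\alpha_1$ on $\{\sum_{k\in\mathcal S_2(P_2)}(1-H_{1k})>0\}$ and $0$ otherwise, and taking expectation over $P_2$ closes the bound.

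\textbf{Main obstacle.} The crux is the two-dimensional leave-one-out identity $R=R^{(j)}$ on $\{j\in\mathcal R\}$, with $R^{(j)}$ measurable with respect to $(P_1^{(j)},P_2)$. In one dimension this is routine because the BH step-up value is invariant under decreases of a single rejected $p$-value; here, in addition, I must combine the monotonicity of the joint-threshold count $f$ with stability of the selection rule (to handle how $\mathcal S_1$ responds to changes in $P_{1j}$) and with the structural assumption that $W_1^{(j)},W_2^{(j)}$ are functions of $(P_1^{(j)},P_2^{(j)})$ only. The verification boils down to the observation that $j\in\mathcal R$ at step-up level $R$ means exactly $R-1$ other features in $\mathcal S_1^{(j)}\cap\mathcal S_2$ meet the joint threshold at level $R$, so the $+1$ offset in the definition of $R^{(j)}$ recovers the original step-up equation.
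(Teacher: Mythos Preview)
Your proof is correct and follows essentially the same approach as the paper. Part~(1) is the standard BH step-up/adjusted-$p$-value equivalence, matching the paper's argument; in Part~(2) you package the leave-one-out construction via the single random variable $R^{(j)}$ and the identity $\mathbf{I}[j\in\mathcal R]/R=\mathbf{I}[\cdots]/R^{(j)}$, whereas the paper partitions on the value of this same quantity through the disjoint events $C_k^{(j)}=\{R^{(j)}=k\}$ and sums over $k$---the two presentations are equivalent and rely on the same key ingredients (stability to make $\mathcal S_1^{(j)}$ depend only on $P_1^{(j)}$, the structural assumption on $W_i^{(j)}$, independence of the null $P_{1j}$, and condition~(\ref{wj1})).
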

\textbf{Proof of item 1 of Theorem \ref{indep} } The result of item
1 of Theorem \ref{indep} follows from Lemma \ref{lem_fdr}. The
conditions of Lemma \ref{lem_fdr} hold with $W_1=S_2,$
$W_{1}^{(j)}=1+S_{2}^{(j)},$ and $W_2=S_1,$
$W_2^{(j)}=1+S_{1}^{(j)}.$ In order to see it, note that for $j\in
\mathcal{S}_1\cap \mathcal{S}_2,$ $S_i=1+S_{i}^{(j)}$ for $i\in
\{1,2\}.$ In addition, note that for arbitrary fixed vector $p_2$
and $j\in\{1,\ldots,m\}$
\begin{align*}\textbf{I}[j\in
\mathcal{S}_2(p_2)]E\left(\frac{1}{W_1^{(j)}}\,|\,P_2=p_2\right)=&\textbf{I}[j\in
\mathcal{S}_2(p_2)]E\left(\frac{1}{1+S_2^{(j)}}\,|P_2=p_2\right)\\=&\textbf{I}[j\in
\mathcal{S}_2(p_2)]\left(\frac{1}{S_2(p_2)}\,\right)\leq\frac{1}{\sum_{j\in
\mathcal{S}_2(p_2)}(1-H_{1j})}.\notag\end{align*} Thus we have
proved inequality (\ref{wj1}). The proof of inequality (\ref{wj2})
is similar. 

\noindent \textbf{Proof of Theorem
\ref{thm-proc-fdr-adaptive}} The result of Theorem
\ref{thm-proc-fdr-adaptive} follows from Lemma \ref{lem_fdr}. The
conditions of Lemma \ref{lem_fdr} hold with
$\mathcal{S}_{i,\lambda}, i=1,2$ as the selected sets,
%$\mathcal{S}_1=\mathcal{S}_{1,\lambda},$
%$\mathcal{S}_2=\mathcal{S}_{2,\lambda}$
and
$$W_1=S_{2,\lambda}\prone=\frac{1+\sum_{i\in
\mathcal{S}_{2,\lambda}}\textbf{I}(P_{1i}>\lambda)}{(1-\lambda)},\,\,W_1^{(j)}=\frac{1+\sum_{i\in
\mathcal{S}_{2,\lambda, i\neq
j}}\textbf{I}(P_{1i}>\lambda)}{(1-\lambda)},$$
$$W_2=S_{1,\lambda}\prtwo=\frac{1+\sum_{i\in
\mathcal{S}_{1,\lambda}}\textbf{I}(P_{2i}>\lambda)}{(1-\lambda)},\,\,W_2^{(j)}=\frac{1+\sum_{i\in
\mathcal{S}_{1,\lambda, i\neq
j}}\textbf{I}(P_{2i}>\lambda)}{(1-\lambda)}.$$ In order to see it,
note that if $j\in \mathcal{S}_{1,\lambda}\cap
\mathcal{S}_{2,\lambda},$ it holds that $\max\{P_{1j}, P_{2j}\}\leq
\lambda.$ In addition, it was shown in the proof of Theorem
\ref{thm-bonferroniadapt}  that for arbitrary fixed vector
$p_2$ and $j\in\{1,\ldots,m\}$%=(p_{21}, \ldots, p_{2m})$
\begin{align*}&\textbf{I}[j\in
\mathcal{S}_{2,\lambda}
(p_2)]E\left(\frac{1}{W_1^{(j)}}\,|\,P_2=p_2\right)\leq\\&\textbf{I}[j\in
\mathcal{S}_{2,\lambda} (p_2)]E\left(1/\left(\frac{1+\sum_{i\in
\mathcal{S}_{2,\lambda, i\neq
j}}(1-H_{1j})\textbf{I}(P_{1i}>\lambda)}{1-\lambda}\right)\,|\,P_2=p_2\right)\leq\\&\frac{1}{\sum_{j\in
\mathcal{S}_{2,\lambda}(p_2)}(1-H_{1j})}.\notag\end{align*}
The proof of inequality (\ref{wj2}) is similar.  \\
\noindent
%It can be similarly shown that for arbitrary fixed vector $p_1$ and $j\in\{1,\ldots,m\},$
%\begin{align*}&\textbf{I}[j\in
%\mathcal{S}_{1,\lambda}
%(p_1)]E\left(\frac{1}{W_2^{(j)}}\,|\,P_1=p_1\right)\leq\frac{1}{\sum_{j\in
%\mathcal{S}_{1,\lambda}(p_1)}(1-H_{2j})}.\notag\end{align*}
\textbf{Proof of item 1 of Lemma \ref{lem_fdr}.}
Note that the procedure given in item 1 of Lemma \ref{lem_fdr} can
be written as follows:
\begin{enumerate}
\item Let $$R=\max\left\{r:
\sum_{j\in\mathcal{S}_1\cap \mathcal{S}_2}\textbf{I}\left[r_j\leq
r\alpha\right] = r\right\}.$$
\item The set of indices with replicability claims is
\begin{align}\mathcal R= \{j: r_j\leq R\alpha, \,\,j \in \mathcal S_1\cap
\mathcal S_2\}.\notag\end{align}
\end{enumerate}
Let $r_0 = \max\left\{r: \sum_{j\in\mathcal{S}_1\cap
\mathcal{S}_2}\textbf{I}\left[r_j\leq r\alpha\right] \geq
r\right\}.$ We prove that $R=r_0$ by contradiction.
%Assume that there exists $r_0>R$ such that $\sum_{j\in\mathcal \mathcal{S}_1\cap
%\mathcal{S}_2}\textbf{I}\left[r_j\leq r_0\alpha\right] > r_0. $
%Then it follows from the definitions of $r_0$ and $R$ that
From the definitions of $r_0$ and $R$ it follows that if $R\neq
r_0,$ then $r_0>R,$ and $\sum_{j\in\mathcal{S}_1\cap
\mathcal{S}_2}\textbf{I}\left[r_j\leq r_0\alpha\right] \geq r_0+1. $
However, since $\sum_{j\in\mathcal{S}_1\cap
\mathcal{S}_2}\textbf{I}\left[r_j\leq (r_0+1)\alpha\right]\geq
\sum_{j\in\mathcal{S}_1\cap \mathcal{S}_2}\textbf{I}\left[r_j\leq
r_0\alpha\right] $ it follows that $r_0+1$ is also in $\left\{r:
\sum_{j\in\mathcal{S}_1\cap \mathcal{S}_2}\textbf{I}\left[r_j\leq
r\alpha\right] \geq r\right\}$, thus contradicting the definition of
$r_0$ as being the greatest value in this set.
%Moreover, let us show that \begin{align}R=\max\left\{r:
%\sum_{j\in\mathcal \mathcal{S}_1\cap
%\mathcal{S}_2}\textbf{I}\left[r_j\leq r\alpha\right] \geq
%r\right\}.\label{def_r}\end{align} By contradiction, assume that
%there exists $r_0>R$ such that $\sum_{j\in\mathcal \mathcal{S}_1\cap
%\mathcal{S}_2}\textbf{I}\left[r_j\leq r_0\alpha\right] > r_0. $ Then
%$$\sum_{j\in\mathcal \mathcal{S}_1\cap
%\mathcal{S}_2}\textbf{I}\left[r_j\leq (r_0+1)\alpha\right]\geq
%\sum_{j\in\mathcal \mathcal{S}_1\cap
%\mathcal{S}_2}\textbf{I}\left[r_j\leq r_0\alpha\right] \geq r_0+1.
%$$ Since $r_0+1>R,$ it follows from the above inequality and the
%definition of $R$ that $$\sum_{j\in\mathcal \mathcal{S}_1\cap
%\mathcal{S}_2}\textbf{I}\left[r_j\leq (r_0+1)\alpha\right]>r_0+1.$$
%Applying this argument repeatedly $|\mathcal{S}_1\cap
%\mathcal{S}_2|-r_0$ times, we obtain
%$$\sum_{j\in\mathcal \mathcal{S}_1\cap
%\mathcal{S}_2}\textbf{I}\left[r_j\leq |\mathcal{S}_1\cap
%\mathcal{S}_2|\alpha\right]>|\mathcal{S}_1\cap \mathcal{S}_2|.$$
%This is a contradiction, since obviously $\sum_{j\in\mathcal
%\mathcal{S}_1\cap \mathcal{S}_2}\textbf{I}\left[r_j\leq
%|\mathcal{S}_1\cap \mathcal{S}_2|\alpha\right]\leq|\mathcal{S}_1\cap
%\mathcal{S}_2|.$
Thus we have proved that \begin{align}R=\max\left\{r:
\sum_{j\in\mathcal{S}_1\cap \mathcal{S}_2}\textbf{I}\left[r_j\leq
r\alpha\right] \geq r\right\}.\label{def_r}\end{align}

%XXXX CONTINUE FROM HERE

We now prove that the procedure that declares as replicated the
features with FDR r-values at most $\alpha$ is equivalent to the
following procedure in item 1, i.e.
\begin{align}\{j: r_j^{FDR}\leq \alpha\}=\{j: r_j\leq R\alpha \},\label{eqsets}\end{align} where
$R$ is given in (\ref{def_r}). Let us first prove that
\begin{align}\{j: r_j^{FDR}\leq \alpha\}\subseteq\{j: r_j\leq R\alpha
\}.\label{muhal1}\end{align} Let $j\in\{j: r_j^{FDR}\leq \alpha\}$
be arbitrary fixed. There exists $i_0\in \mathcal{S}_1\cap
\mathcal{S}_2$ such that $r_{i_0}\geq r_j$ and
$$\frac{r_{i_0}}{rank(r_{i_0})}=\min_{\{i:\, r_i\geq r_j, i\in
\mathcal{S}_1\cap \mathcal{S}_2\}}\frac{r_i}{rank(r_i)}\leq
\alpha.$$ Thus $r_{i_0}\leq rank(r_{i_0})\alpha.$ Therefore,
$rank(r_{i_0})\leq \sum_{j\in\mathcal{S}_1\cap
\mathcal{S}_2}\textbf{I}\left[r_j\leq rank(r_{i_0})\alpha\right]$.
This inequality and the expression for $R$ given in (\ref{def_r})
yield that $rank(r_{i_0})\leq R.$ It follows that $r_{i_0}\leq
R\alpha.$ Recall that $r_j\leq r_{i_0},$ therefore $r_j\leq
R\alpha.$ Thus we have proved $(\ref{muhal1}).$ Let us now prove
that
\begin{align}\{j: r_j\leq R\alpha
\}\subseteq\{j: r_j^{FDR}\leq \alpha\}.\label{muhal2}\end{align} Let
$j\in \mathcal{S}_1\cap \mathcal{S}_2$ be an arbitrary fixed index
such that $r_j\leq R\alpha.$ Since $r_j\leq r_{(R)}$, and
$\frac{r_{(R)}}{R}\leq \alpha$ (where $r_{(R)}$ is the $R$'th
largest $r$-value), it follows that
% Define $$i_0=\arg\max_{l\in
%\mathcal{S}_1\cap \mathcal{S}_2}\{r_l:r_l\leq R\alpha\}.$$
% From the definition of $i_0$ it
%follows that $r_j\leq r_{i_0},$ and
%$$rank(r_{i_0})=\sum_{j\in\mathcal \mathcal{S}_1\cap
%\mathcal{S}_2}\textbf{I}\left[r_j\leq R\alpha\right]=R.$$ Since
%$r_{i_0}\leq R\alpha,$ we obtain from the above that
%$\frac{r_{i_0}}{rank(r_{i_0})}\leq \alpha.$ Recall that $i_0\in
%\mathcal{S}_1\cap \mathcal{S}_2$ and $r_j\leq r_{i_0},$ therefore
$$r_j^{FDR}=\min_{\{i:\, r_i\geq r_j, i\in S_1\cap
S_2\}}\frac{r_i}{rank(r_i)}\leq \alpha.$$ Thus we have proved
(\ref{muhal2}), which completes the proof of (\ref{eqsets}) and of
item 1.
\\\textbf{Proof of item 2 of Lemma \ref{lem_fdr}}
For $j\in\{1,\ldots,m\}$ let us define $C_k^{(j)}$ as the event in
which if $r_j^{FDR}\leq \alpha,$ then the total number of FDR
$r$-values which are at most $\alpha$ is $k.$ It follows from item 1
and from condition (\textit{ii}) of item 2 that the event
$C_k^{(j)}$ is defined on the space $(P_1^{(j)}, P_2^{(j)})$ as
follows. Let
\begin{align}T_{i}^{(j)}=\left\{
\begin{array}{cl}
\max\left(\frac{W_1^{(j)}p_{1i}}{c}, \frac{W_2^{(j)}p_{2i}}{1-c} \right) & \text{if}\,\, i\in \mathcal{S}_1^{(j)}\cap \mathcal{S}_2^{(j)},\\
 \infty& \text{otherwise. } \\
\end{array} \right. \notag\end{align}
and let $T^{(j)}_{1}\leq \ldots\leq T^{(j)}_{m-1}$ be the sorted
$T$-values, where we set $T_0^{(j)}=0.$ Note that $T_{i}^{(j)}=r_i$
for $i\in \mathcal{S}_1^{(j)}\cap \mathcal{S}_2^{(j)}.$ It follows
from the equivalent procedure given in item 1 of Lemma \ref{lem_fdr}
that
\begin{align}C_k^{(j)}=\{(P_1^{(j)},
P_2^{(j)}):\,T^{(j)}_{(k-1)}\leq k\alpha, T^{(j)}_{(k)}>
(k+1)\alpha,\ldots,T^{(j)}_{(m-1)}> m\alpha
\}.\label{ckj-lemma}\end{align} Note that given $P_1=p_1,$ for $j\in
\mathcal{S}_1(p_1),$ $C_k^{(j)}=\emptyset$ for $k>S_1(p_1)$,  since
the number of finite $T^{(j)}_i$'s is smaller or equal to
$S_1(p_1)-1.$ Similarly, given $P_2=p_2$, for $j\in
\mathcal{S}_2(p_2),$ $C_k^{(j)}=\emptyset$ for $k>S_2(p_2)$%, sincethe number of finite $T^{(j)}_i$'s is smaller or equal to $S_2(p_2)-1.$
. In addition, note that $C_k^{(j)}$ and $C_{k'}^{(j)}$ are disjoint
events for any $k\neq k'$ and
$\sum_{k=1}^{S_1(p_1)}\textmd{Pr}(C_k^{(j)}|P_1=p_1)=\sum_{k=1}^{S_2(p_2)}\textmd{Pr}(C_k^{(j)}|P_2=p_2)=1.$

The FDR for replicability analysis is
\begin{align}FDR&=\sum_{j=1}^m(1-H_{1j}H_{2j})\sum_{k=1}^m\frac{1}{k}\textmd{Pr}\left(j\in
\mathcal{S}_1\cap \mathcal{S}_2, r_j^{FDR}\leq \alpha,
C_k^{(j)}\right)\notag\\&=\sum_{j=1}^m(1-H_{1j}H_{2j})\sum_{k=1}^m\frac{1}{k}\textmd{Pr}\left(j\in
\mathcal{S}_1\cap \mathcal{S}_2, r_j\leq k\alpha,
C_k^{(j)}\right)\label{eqproc}\\&\leq\sum_{j=1}^m(1-H_{1j})\sum_{k=1}^m\frac{1}{k}\textmd{Pr}\left(j\in
\mathcal{S}_1\cap \mathcal{S}_2, r_j\leq k\alpha,
C_k^{(j)}\right)\label{ineqind}\\&+\sum_{j=1}^m(1-H_{2j})\sum_{k=1}^m\frac{1}{k}\textmd{Pr}\left(j\in
\mathcal{S}_1\cap \mathcal{S}_2, r_j\leq k\alpha,
C_k^{(j)}\right)\label{ineqind2}
\end{align}
where the equality in (\ref{eqproc}) follows from item 1, and the
inequality in (\ref{ineqind}) follows from the fact that
$1-H_{1j}H_{2j}\leq2-H_{1j}-H_{2j}$ for all $j\in\{1,\ldots,m\}.$ We
prove that for $(p_1, p_2)$ arbitrary fixed, the following
inequalities hold for conditional expectations.
\begin{align}&\sum_{j=1}^m(1-H_{1j})\sum_{k=1}^m\frac{1}{k}\textmd{Pr}\left(j\in \mathcal{S}_1\cap \mathcal{S}_2, r_j\leq k\alpha,
C_k^{(j)}\,|\,P_2=p_2\right)\leq \alpha_1,
%\\&\sum_{j\in \mathcal{S}_2(p_2)}^m(1-H_{1j})\sum_{k=1}^m\frac{1}{k}\textmd{Pr}\left(j\in \mathcal{S}_1, r_j\leq k\alpha, C_k^{(j)}\,|\,P_2=p_2\right)\leq
%c\alpha,
\label{cond1}\\&\sum_{j=1}^m(1-H_{2j})\sum_{k=1}^m\frac{1}{k}\textmd{Pr}\left(j\in
\mathcal{S}_1\cap \mathcal{S}_2, r_j\leq k\alpha,
C_k^{(j)}\,|\,P_1=p_1\right)\leq \alpha_2.\label{cond2}%\\&\sum_{j\in
%\mathcal{S}_1(p_1)}^m(1-H_{2j})\sum_{k=1}^m\frac{1}{k}\textmd{Pr}\left(j\in
%\mathcal{S}_2, r_j\leq k\alpha, C_k^{(j)}\,|\,P_1=p_1\right)\leq
%(1-c)\alpha.\label{cond2}
\end{align} Note that since these
inequalities hold for all $p_1$ and $p_2,$ they yield that the upper
bounds in (\ref{cond1}) and (\ref{cond2}) hold for expressions in
(\ref{ineqind}) and (\ref{ineqind2}) respectively, therefore FDR for
replicability analysis is upper bounded by
$\alpha_1+\alpha_2=\alpha.$ Thus it remains to prove inequalities
(\ref{cond1}) and (\ref{cond2}). We now prove inequality
(\ref{cond1}).
\begin{align}
&\sum_{j=1}^m(1-H_{1j})\sum_{k=1}^m\frac{1}{k}\textmd{Pr}\left(j\in
\mathcal{S}_1\cap \mathcal{S}_2, r_j\leq k\alpha,
C_k^{(j)}\,|\,P_2=p_2\right)=\notag\\&\sum_{j\in
\mathcal{S}_2(p_2)}(1-H_{1j})\sum_{k=1}^{S_2(p_2)}\frac{1}{k}\textmd{Pr}\left(j\in
\mathcal{S}_1, r_j\leq k\alpha,
C_k^{(j)}\,|\,P_2=p_2\right)\leq\label{lemcondw}\\&\sum_{j\in
\mathcal{S}_2(p_2)}(1-H_{1j})\sum_{k=1}^{S_2(p_2)}\frac{1}{k}\textmd{Pr}\left(j\in
\mathcal{S}_1, P_{1j}\leq \frac{kc\alpha}{W_1^{(j)}},
C_k^{(j)}\,|\,P_2=p_2\right)\leq\label{unif}\\&\alpha_1\sum_{j\in
\mathcal{S}_2(p_2)}(1-H_{1j})\sum_{k=1}^{S_2(p_2)}E\left(
\frac{1}{W_1^{(j)}}\textbf{I}\left[C_k^{(j)}\right]\,|\,P_2=p_2\right)=\notag\\&\alpha_1\sum_{j\in
\mathcal{S}_2(p_2)}(1-H_{1j})E\left(\frac{1}{W_1^{(j)}}\sum_{k=1}^{S_2(p_2)}
\textbf{I}\left[C_k^{(j)}\right]\,|\,P_2=p_2\right)=\label{ckj}\\&\alpha_1\sum_{j\in
\mathcal{S}_2(p_2)}(1-H_{1j})E\left(\frac{1}{W_1^{(j)}}\,|\,P_2=p_2\right)\leq\label{lemcond2}
\\&\alpha_1\sum_{j\in
\mathcal{S}_2(p_2)}(1-H_{1j})\left(\frac{1}{\sum_{j\in
\mathcal{S}_2(p_2)}(1-H_{1j})}\right)=\alpha_1.\notag
\end{align}
The inequality in (\ref{lemcondw}) follows from condition
(\textit{ii}) of item 2. The inequality in (\ref{unif}) follows from
the fact that the distribution of $P_{1j}$ is uniform or
stochastically larger than uniform and $P_{1j}$ with $H_{1j}=0$ is
independent of all other $p$-values. The equality in (\ref{ckj})
follows from the fact that given $P_2=p_2,$
$\cup_{k=1}^{S_2(p_2)}C_k^{(j)}$ is the whole sample space
represented as a union of disjoint events (as discussed above),
therefore $\sum_{k=1}^{S_2(p_2)}
\textbf{I}\left[C_k^{(j)}\right]=1.$ The inequality in
(\ref{lemcond2}) follows from condition (\textit{ii}) of item 2,
inequality (\ref{wj1}). Thus we proved inequality (\ref{cond1}).
Inequality
(\ref{cond2}) is proved similarly%using condition (b) of item 2, inequality (\ref{wj2})
.
%THIS WAS PREVIOUSLY ITEM B IN THEOREM 3.3, SO REFERENCES IN THE PROOF NEED TO BE CORRECTED.
%XXXXXXXXXXXXXXXXXX NEW PROOF FOR DEPENDENCY XXXXXXXXXXXXXXXXXX
\noindent
\\\textbf{Proof of item 2 in Theorem \ref{indep}.} The proof is
similar to the proof of item 3 of Theorem S3.2 in the Supplementary
Material of \cite{Bogomolov13}. We give it below for completeness.
For $j,k\in\{1,\ldots,m\},$ we define $\tilde{C}_k^{(j)}$ as the
event in which if $\tilde{r}_j^{FDR}\leq \alpha,$ then the total
number of arbitrary-dependence FDR $r$-values which are at most
$\alpha$ is $k.$ Similarly to the proof of item 2 of Lemma
\ref{lem_fdr} it can be shown that the event $\widetilde{C}_k^{(j)}$
is defined on the space of $(P_1^{(j)}, P_2^{(j)})$ as in
(\ref{ckj-lemma}), where $T$-values are replaced by $\tilde{T}$-
values which are defined as follows.
\begin{align}\tilde{T}_{i}^{(j)}=\left\{
\begin{array}{cl}
\max\left(\frac{(\sum_{k=1}^{S_2^{(j)}+1}1/k)(S_2^{(j)}+1)p_{1i}}{c},
\frac{(\sum_{k=1}^{(S_1^{(j)}+1)}1/k)(S_1^{(j)}+1)p_{2i}}{1-c}\right)  & \text{if}\,\, i\in \mathcal{S}_1^{(j)}\cap \mathcal{S}_2^{(j)},\\
 \infty& \text{otherwise. } \\
\end{array} \right. \notag\end{align}
Note that $\tilde{T}_{i}^{(j)}=\tilde{r}_i$ for $i\in
\mathcal{S}_1^{(j)}\cap \mathcal{S}_2^{(j)},$ where the expression
for $\tilde{r}_i$ is given in (\ref{rtilda}). Similarly to the proof
of item 2 of Lemma \ref{lem_fdr}, it can be shown that given
$P_1=p_1,$ $\tilde{C}_k^{(j)}=\emptyset$ for $j\in
\mathcal{S}_1(p_1)$ and $k>S_1(p_1),$ and $\cup_{k=1}^{S_1(p_1)}
\tilde{C}_k^{(j)}$ is the whole sample space. Given $P_2=p_2,$
$\tilde{C}_k^{(j)}=\emptyset$ for $j\in \mathcal{S}_2(p_2)$ and
$k>S_2(p_2),$ and $\cup_{k=1}^{S_2(p_2)}\tilde{C}_k^{(j)}$ is the
whole sample space. In addition, $\tilde{C}_k^{(j)}$ and
$\tilde{C}_{k'}^{(j)}$ are disjoint events for any $k\neq k'.$

We obtain the following inequality for the FDR for replicability
analysis using derivations  (\ref{eqproc})-(\ref{ineqind2}) where we
replace $r_j^{FDR},$ $r_j$ and $C_k^{(j)}$ by $\tilde{r}_j^{FDR},$
$\tilde{r}_j$ and $\tilde{C}_k^{(j)},$ respectively.
\begin{align}FDR&
\leq\sum_{j=1}^m(1-H_{1j})\sum_{k=1}^m\frac{1}{k}\textmd{Pr}\left(j\in
\mathcal{S}_1\cap \mathcal{S}_2, \tilde{r}_j\leq k\alpha,
\tilde{C}_k^{(j)}\right)\notag\\&+\sum_{j=1}^m(1-H_{2j})\sum_{k=1}^m\frac{1}{k}\textmd{Pr}\left(j\in
\mathcal{S}_1\cap \mathcal{S}_2, \tilde{r}_j\leq k\alpha,
\tilde{C}_k^{(j)}\right)\label{fdr-dep-1}
\end{align}

We now find an upper bound for the first term of the sum in
(\ref{fdr-dep-1}).
 Let $p_2$ be arbitrary fixed. We define
 $\tilde{\alpha}_1=\alpha_1/(\sum_{i=1}^{S_2(p_2)}1/i).$ We shall prove that
\begin{align}\sum_{j=1}^m(1-H_{1j})\sum_{k=1}^m\frac{1}{k}\textmd{Pr}\left(j\in
\mathcal{S}_1\cap \mathcal{S}_2, \tilde{r}_j\leq k\alpha,
\tilde{C}_k^{(j)}\,|\,P_2=p_2\right)\leq
\alpha_1.\label{ineqcond1}\end{align} %We first show that the
%expression in (\ref{ineqcond1}) is upper bounded by a sum of two
%terms.
Note that
\begin{align}
&\sum_{j=1}^m(1-H_{1j})\sum_{k=1}^m\frac{1}{k}\textmd{Pr}\left(j\in
\mathcal{S}_1\cap \mathcal{S}_2, \tilde{r}_j\leq k\alpha,
\tilde{C}_k^{(j)}\,|\,P_2=p_2\right)=\sum_{j\in
\mathcal{S}_2(p_2)}(1-H_{1j})\times\notag\\&\sum_{k=1}^{S_2(p_2)}\frac{1}{k}\,\textmd{Pr}\left(
S_1\left(\sum_{i=1}^{S_1}1/i\right)p_{2j}\leq k\alpha_2, j\in
\mathcal{S}_1,
P_{1j}\leq\frac{k\tilde{\alpha}_1}{S_2(p_2)},
\tilde{C}_{k}^{(j)}\,|\,P_2=p_2\right)\notag&\\&\leq \sum_{j\in
\mathcal{S}_2(p_2)}(1-H_{1j})\sum_{k=1}^{S_2(p_2)}\frac{1}{k}\,\textmd{Pr}\left(
P_{1j}\leq\frac{k\tilde{\alpha}_1}{S_2(p_2)},
\tilde{C}_{k}^{(j)}\,|\,P_2=p_2\right).\notag
\end{align}
For each $j$ with ${H}_{1j}=0,$ $k\in\{1,\ldots,S_2(p_2)\},$ and
$l\in\{1,\ldots,k\},$ let us define:
\begin{align*}p_{jkl}=\textmd{Pr}\left(P_{1j}\in\left(\frac{(l-1)\tilde{\alpha}_1}{S_2(p_2)},
\frac{l\tilde{\alpha}_1}{S_2(p_2)} \right],
\tilde{C}_k^{(j)}| P_2=p_2\right).\end{align*} Note that for $j$
with ${H}_{1j}=0,$ $\textmd{Pr}\left(P_{1j}\leq x\right)\leq x$ for
all $x\in[0,1]$, in particular $\textmd{Pr}\left(P_{1j}=0\right)=0.$
Therefore, for each $j$ with ${H}_{1j}=0$ and
$k\in\{1,\ldots,S_2(p_2)\},$
\begin{align}
\textmd{Pr}\left(P_{1j}
\leq\frac{k\tilde{\alpha}_1}{S_2(p_2)},
\tilde{C}_{k}^{(j)}\,|\,P_2=p_2\right)=\sum_{l=1}^{k}p_{jkl}.\notag
\end{align}
Using this equality we obtain:
\begin{align}
&\sum_{j\in
\mathcal{S}_2(p_2)}(1-H_{1j})\sum_{k=1}^{S_2(p_2)}\frac{1}{k}\textmd{Pr}\left(P_{1j}
\leq\frac{k\tilde{\alpha}_1}{S_2(p_2)},\,
\tilde{C}_{k}^{(j)}\,|\,P_2=p_2\right)=\notag\\&\sum_{j\in
\mathcal{S}_2(p_2)}(1-H_{1j})\sum_{k=1}^{S_2(p_2)}\frac{1}{k}\sum_{l=1}^kp_{jkl}=\sum_{j\in
\mathcal{S}_2(p_2)}(1-H_{1j})\sum_{l=1}^{S_2(p_2)}\sum_{k=l}^{S_2(p_2)}\frac{1}{k}p_{jkl}\leq\notag\\&
\sum_{j\in\mathcal{S}_2(p_2)}(1-H_{1j})\sum_{l=1}^{S_2(p_2)}\sum_{k=l}^{S_2(p_2)}\frac{1}{l}p_{jkl}
\leq\sum_{j\in
\mathcal{S}_2(p_2)}(1-H_{1j})\sum_{l=1}^{S_2(p_2)}\frac{1}{l}\sum_{k=1}^{S_2(p_2)}p_{jkl}.\label{last3}
\end{align}
Since $\cup_{k=1}^{S_2(p_2)} C_k^{(j)}$ is a union of disjoint
events, we obtain for each $j$ with ${H}_{1j}=0$ and
$l\in\{1,\ldots,S_2(p_2)\}$:
\begin{align}
&\sum_{k=1}^{S_2(p_2)} p_{jkl}=\textmd{Pr}\left(P_{1j}\in
\left(\frac{(l-1)\tilde{\alpha}_1}{S_2(p_2)},
\frac{l\tilde{\alpha}_1}{S_2(p_2)} \right],\,
\cup_{k=1}^{S_2(p_2)}\tilde{C}_k^{(j)}\,|\,P_2=p_2\right)\notag\\&\leq\textmd{Pr}\left(P_{1j}\in
\left(\frac{(l-1)\tilde{\alpha}_1}{S_2(p_2)},
\frac{l\tilde{\alpha}_1}{S_2(p_2)} \right]\,|\,P_2=p_2\right)\notag\\&=
\textmd{Pr}\left(P_{1j}\leq\frac{l\tilde{\alpha}_1}{S_2(p_2)}\,|P_2=p_2\right)-\textmd{Pr}\left(P_{1j}\leq\frac{(l-1)\tilde{\alpha}_1}{S_2(p_2)}\,|\,P_2=p_2\right).\notag
\end{align}
Therefore for each $j$ with $H_{1j}=0$ we obtain:
\begin{align}
&\sum_{l=1}^{S_2(p_2)}\frac{1}{l}\sum_{k=1}^{S_2(p_2)}p_{jkl}\leq
\sum_{l=1}^{S_2(p_2)}\frac{1}{l}\left[\textmd{Pr}\left(P_{1j}\leq
\frac{l\tilde{\alpha}_1}{S_2(p_2)}\,|\,P_2=p_2\right)-\textmd{Pr}\left(P_{1j}\leq
\frac{(l-1)\tilde{\alpha}_1}{S_2(p_2)}\,|\,P_2=p_2\right)\right]\notag\\&=
\sum_{l=1}^{S_2(p_2)}\frac{1}{l}\textmd{Pr}\left(P_{1j}\leq
\frac{l\tilde{\alpha}_1}{S_2(p_2)}\,|\,P_2=p_2\right)-\sum_{l=0}^{S_2(p_2)-1}\frac{1}{l+1}\textmd{Pr}\left(P_{1j}\leq
\frac{l\tilde{\alpha}_1}{S_2(p_2)}\,|\,P_2=p_2\right)\notag
\\&=\sum_{l=1}^{S_2(p_2)-1}\left(\frac{1}{l}-\frac{1}{l+1}
\right)\textmd{Pr}\left(P_{1j}\leq\frac{l\tilde{\alpha}_1}{S_2(p_2)}\,|\,P_2=p_2\right)+\frac{1}{S_2(p_2)}\textmd{Pr}\left(P_{1j}\leq
\tilde{\alpha}_1\,|\,P_2=p_2\right)\notag\\&\leq\sum_{l=1}^{S_2(p_2)-1}\frac{1}{l+1}\left(\frac{\tilde{\alpha}_1}{S_2(p_2)}\right)+\frac{\tilde{\alpha}_1}{S_2(p_2)}=\left(\frac{\tilde{\alpha}_1}{S_2(p_2)}\right)\sum_{l=1}^{S_2(p_2)}\frac{1}{l}=\frac{\alpha_1}{S_2(p_2)}.\label{danilast3}
\end{align}
The inequality in (\ref{danilast3}) follows from the null
independence-across-studies condition and the fact that for $j$ with
$H_{1j}=0,$ $\textmd{Pr}(P_{1j}\leq x)\leq x$ for all $x\in[0,1].$
Combining (\ref{last3}) with (\ref{danilast3}) we obtain the
inequality in (\ref{ineqcond1}):
\begin{align}&\sum_{j\in
\mathcal{S}_2(p_2)}(1-H_{1j})\sum_{k=1}^{S_2(p_2)}\frac{1}{k}\textmd{Pr}\left(P_{1j}
\leq\frac{k\tilde{\alpha_1}}{S_2(p_2)},\,
\tilde{C}_{k}^{(j)}\,|\,P_2=p_2\right)\leq\alpha_1\frac{\sum_{j\in
\mathcal{S}_2(p_2)}(1-H_{1j})}{S_2(p_2)}\leq\alpha_1.\notag%\left(\frac{\tilde{\alpha}_1}{S_2(p_2)}\right)\sum_{l=1}^{S_2(p_2)}\frac{1}{l}\notag\\&=\tilde{\alpha}_1\frac{\sum_{j\in
%\mathcal{S}_2(p_2)}(1-H_{1j})}{S_2(p_2)}\sum_{l=1}^{S_2(p_2)}\frac{1}{l}\leq\tilde{\alpha}_1\sum_{l=1}^{S_2(p_2)}\frac{1}{l}=\alpha_1.\notag
\end{align}

We have proved that the inequality in (\ref{ineqcond1}) holds for
$p_2$ arbitrary fixed, therefore the first term of the sum in
(\ref{fdr-dep-1}) is upper bounded by $\alpha_1.$ Similarly it can
be proven that the second term of the sum in (\ref{fdr-dep-1}) is
upper bounded by $\alpha_2.$ It follows from (\ref{fdr-dep-1}) that
these two inequalities yield $FDR\leq \alpha_1+\alpha_2=\alpha.$

%XXXXXXXXXXXXXXXXXXXXXXX END OF NEW PROOF FOR DEPENDENCY XXXXXXXXXXXXXXXXXX
\section{Theoretical properties for Section \ref{sec-estthresholds}}\label{app-estthresholdsTheoretical}
We use the following lemma to justify the empirical selection of
$(t_1,t_2)$ for Procedure \ref{proc-FWER} based on Bonferroni.
\begin{lemma}\label{lemma-estthreshold}
Assume that $G_i(t_i), i \in \{1,2\}$  are monotone increasing
functions. Let  $(t^*_1,t^*_2)$ be the solution to the following two
equations:
\begin{equation}
t_1 = \frac{\alpha_1}{G_2(t_2)};  \quad t_2 =
\frac{\alpha_2}{G_1(t_1)}. \label{eqlemma}
\end{equation}
 Then there does not exist a pair $(t_1,t_2)$ that dominates $(t_1^*, t_2^*)$ in the following sense: $$\binom{\min\left(t_1, \frac{\alpha_1}{ G_2(t_2)}\right)}{\min\left(\frac{\alpha_2}{ G_1(t_1)},t_2  \right)} > \binom{\min\left(t_1^*, \frac{\alpha_1}{ G_2(t^*_2)}\right)}{\min\left(\frac{\alpha_2}{ G_1(t_1^*)},t_2^*  \right)},$$ where the strict inequality means that both coordinates are at least as large with  $(t_1,t_2)$ as with $(t_1^*, t_2^*)$, but at least one coordinate is strictly larger.
\end{lemma}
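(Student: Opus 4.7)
The plan is to argue by contradiction, exploiting the simple observation that at the solution $(t_1^*,t_2^*)$ the two coordinates of the displayed vector collapse to $(t_1^*,t_2^*)$ themselves, so any dominating pair must push both coordinates beyond $t_1^*$ and $t_2^*$. The monotonicity of $G_1, G_2$ will then force the dominating pair back to $(t_1^*,t_2^*)$.

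First I would note that at $(t_1^*,t_2^*)$, the defining equations (\ref{eqlemma}) give $t_1^* = \alpha_1/G_2(t_2^*)$ and $t_2^* = \alpha_2/G_1(t_1^*)$, so the $\min$'s are attained with equality and the vector in the lemma becomes $(t_1^*,t_2^*)$. Now assume for contradiction that some $(t_1,t_2)$ satisfies
\begin{equation*}
\min\!\left(t_1,\,\tfrac{\alpha_1}{G_2(t_2)}\right) \ge t_1^*, \qquad \min\!\left(\tfrac{\alpha_2}{G_1(t_1)},\,t_2\right) \ge t_2^*,
\end{equation*}
with at least one strict inequality.

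From the first inequality, both entries of the $\min$ must be $\ge t_1^*$; in particular $\alpha_1/G_2(t_2) \ge t_1^* = \alpha_1/G_2(t_2^*)$, which by monotonicity of $G_2$ yields $t_2 \le t_2^*$. Analogously, the second inequality yields $\alpha_2/G_1(t_1) \ge t_2^* = \alpha_2/G_1(t_1^*)$, so $t_1 \le t_1^*$ by monotonicity of $G_1$. But the same inequalities also require $t_1 \ge t_1^*$ and $t_2 \ge t_2^*$ directly, so $t_1 = t_1^*$ and $t_2 = t_2^*$. This forces the dominating vector to coincide with $(t_1^*,t_2^*)$, contradicting the assumed strict inequality in one coordinate.

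The argument is essentially a one-line monotonicity/contradiction argument, and I do not anticipate a genuine obstacle; the only care needed is to unpack the two-sided $\min$ inequalities correctly, i.e.\ remembering that $\min(a,b)\ge c$ is equivalent to $a\ge c$ and $b\ge c$, and to apply monotonicity of $G_i$ in the right direction (larger $G_i(t_i)$ means smaller $\alpha/G_i(t_i)$, hence the reversed ordering on $t_i$). No additional assumptions beyond monotonicity of $G_1,G_2$ are needed, so the same proof will justify the analogous claims for the adaptive FWER selection equations (\ref{sel-fwer-adapt}) by taking $G_i(t_i) = |\mathcal{S}_{i,\lambda}(t_i)|\hat\pi_0^{(\cdot)}(t_i)$, provided these remain monotone in $t_i$.
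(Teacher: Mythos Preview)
Your argument is essentially the same contradiction-via-monotonicity as the paper's proof, and it is correct in spirit. One small imprecision: from $G_2(t_2)\le G_2(t_2^*)$ you conclude $t_2\le t_2^*$, but the $G_i$ in the intended applications (e.g.\ $S_i(t_i)$) are step functions, hence only non-decreasing, so this implication need not hold as stated. The paper avoids this by assuming without loss of generality that the \emph{first} coordinate is strictly larger, obtaining the strict inequality $G_2(t_2^o)<G_2(t_2^*)$, from which $t_2^o<t_2^*$ does follow by contraposition of non-decreasingness, and then reading off $t_2^o\ge t_2^*$ from the second coordinate for an immediate contradiction. Your route is equally easy to repair: you already have $t_1\ge t_1^*$ and $t_2\ge t_2^*$, which together with $G_i(t_i)\le G_i(t_i^*)$ force $G_i(t_i)=G_i(t_i^*)$, and hence the two $\min$'s evaluate exactly to $(t_1^*,t_2^*)$, giving the contradiction without ever needing $t_i=t_i^*$.
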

See Appendix \ref{app-lemmaproof} for a proof. Clearly, $S_1(t_1)$
and $|\mathcal{S}_{1,\lambda}(t_1)|\prtwo$ are increasing functions
of $t_1$, and similarly $S_2(t_2)$ and
$|\mathcal{S}_{2,\lambda}(t_2)|\prone$ are increasing functions of
$t_2$. From Lemma \ref{lemma-estthreshold} it follows that  the
choice $(t^*_1,t^*_2)$ in equations (\ref{eq-nonlin}) or
(\ref{sel-fwer-adapt}) is not dominated by any other choice of
$(t_1,t_2)$ in Procedure \ref{procfdrsym} and Procedure
\ref{proc-bonferroniadapt}, respectively. Therefore, we suggest
these data-dependent $(t^*_1,t^*_2)$.

Our next theorems state that the FWER and FDR of the non-adaptive
procedures using the above data-dependent thresholds for selection
are controlled under independence.
\begin{theorem}\label{thm-fwer-data-dep}
If the $p$-values from true null hypotheses within each study are
exchangeable, and each independent of all other $p$-values, then
Procedure \ref{proc-FWER} based on Bonferroni with selection
thresholds $(t_1,t_2) =(t^*_1,t^*_2)$ which are a single solution to
equations (\ref{eq-nonlin}) controls the FWER for replicability
analysis at level $\alpha.$
\end{theorem}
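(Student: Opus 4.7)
My plan is to extend the decomposition used for Theorem~\ref{thm-fwer} to cope with the fact that the selection thresholds $(t_1^*, t_2^*)$ are now data-dependent on both studies via the coupled equations~(\ref{eq-nonlin}). First, I would write $\mathrm{FWER} \leq \Pr(V_1 > 0) + \Pr(V_2 > 0)$ with $V_i = |\mathcal R \cap \{j : H_{ij} = 0\}|$, and, by the symmetry of (\ref{eq-nonlin}) under interchange of the two studies, reduce the problem to showing $\Pr(V_1 > 0) \leq \alpha_1$.

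The core idea is a leave-one-out construction. For each null index $j \in N_1 := \{j : H_{1j} = 0\}$, consider the modified version of (\ref{eq-nonlin}) obtained by replacing $|\mathcal{S}_1(t_1)|$ with $|\{i \neq j : p_{1i} \leq t_1\}| + 1$, which forces $j$ into $\mathcal{S}_1$ regardless of $p_{1j}$. Let $(\tilde t_1^{\,j}, \tilde t_2^{\,j})$ be the solution of this modified system returned by the same algorithm that produced $(t_1^*, t_2^*)$. Because this system does not involve $p_{1j}$, the pair $(\tilde t_1^{\,j}, \tilde t_2^{\,j})$ is a function of $P_2$ and $(p_{1i})_{i \neq j}$ only. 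The key observation is that on the event $\{j \in \mathcal{S}_1(t_1^*)\}$, i.e. $\{p_{1j} \leq t_1^*\}$, the full fixed-point equations coincide with the leave-one-out system, so that $(t_1^*, t_2^*) = (\tilde t_1^{\,j}, \tilde t_2^{\,j})$.

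By the assumption that each null $p$-value is independent of all other $p$-values, $p_{1j}$ is then independent of $(\tilde t_1^{\,j}, \tilde t_2^{\,j}, p_{2j})$. A union bound and the standard super-uniform bound for nulls yield
\[
\Pr(V_1 > 0) \;\leq\; \sum_{j \in N_1} \Pr\!\bigl(p_{1j} \leq \tilde t_1^{\,j},\; p_{2j} \leq \tilde t_2^{\,j}\bigr) \;\leq\; \sum_{j \in N_1} E\!\bigl[\tilde t_1^{\,j}\, \mathbf{I}(p_{2j} \leq \tilde t_2^{\,j})\bigr].
\]
Applying the fixed-point identity $\tilde t_1^{\,j}\cdot|\mathcal{S}_2(\tilde t_2^{\,j})| = \alpha_1$ from the leave-one-out system, and using exchangeability of $\{p_{1j}\}_{j \in N_1}$ to bound $\sum_{j \in N_1} \mathbf{I}(p_{2j} \leq \tilde t_2^{\,j})$ by (a suitable fraction of) the full selection count $|\mathcal{S}_2(\tilde t_2^{\,j})|$, the right-hand side collapses to at most $\alpha_1$.

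The hardest part will be justifying the equality $(t_1^*, t_2^*) = (\tilde t_1^{\,j}, \tilde t_2^{\,j})$ on the event $\{j \in \mathcal{S}_1(t_1^*)\}$: the leave-one-out system can admit multiple fixed points, and one must argue that the algorithm's tie-breaking rule produces the same solution whether it sees the actual $p_{1j}$ or the ``forced-in'' configuration. Exchangeability of the null study-1 $p$-values is also essential in the final counting step, where the per-$j$ leave-one-out quantities $\tilde t_2^{\,j}$ must be reconciled into a single $\alpha_1$ bound; without it, the sum over $j \in N_1$ would not collapse cleanly through the fixed-point identity.
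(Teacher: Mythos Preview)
Your proposal is correct and follows essentially the same route as the paper's proof: the leave-one-out construction $(\tilde t_1^{\,j},\tilde t_2^{\,j})$ obtained by replacing $|\mathcal S_1(t_1)|$ with $|\{i\neq j: p_{1i}\leq t_1\}|+1$, the identification $(t_1^*,t_2^*)=(\tilde t_1^{\,j},\tilde t_2^{\,j})$ on $\{p_{1j}\leq t_1^*\}$, and the use of the fixed-point identity $\tilde t_1^{\,j}|\mathcal S_2(\tilde t_2^{\,j})|=\alpha_1$ are exactly what the paper does. The one point worth sharpening is your description of how exchangeability enters: the paper does not directly bound $\sum_{j\in N_1}\mathbf I(p_{2j}\leq \tilde t_2^{\,j})$, but rather uses exchangeability of the study-1 nulls to argue that $(P_1^{(j)},P_2)\stackrel{d}{=}(P_1^{(j_0)},P_2)$ for any fixed $j_0\in N_1$, so that each term $E[\mathbf I(P_{2j}\leq \tilde t_2^{\,j})/|\mathcal S_2(\tilde t_2^{\,j})|]$ may be rewritten with the common (random) threshold $\tilde t_2^{\,j_0}$, after which the sum over $j\in N_1$ is trivially bounded by $|\mathcal S_2(\tilde t_2^{\,j_0})|$; your concern about the multiple-solution issue is legitimate and is sidestepped in the paper by the ``single solution'' hypothesis in the theorem statement.
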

The proof of Theorem \ref{thm-fwer-data-dep} is given in Appendix
\ref{app-thm-fwer-data-dep}.

\begin{theorem}\label{thm-FDR-data-dep}
If the $p$-values from true null hypotheses within each study are
exchangeable, and each independent of all other $p$-values,
Procedure \ref{procfdrsym} with selection thresholds $(t_1,t_2)
=(t^*_1,t^*_2)$ which are a single solution to equations
(\ref{sel-fdr}) controls the FDR for replicability analysis at level
$\alpha$.
\end{theorem}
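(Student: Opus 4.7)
The plan is to adapt the arguments from Lemma \ref{lem_fdr} and item 1 of Theorem \ref{indep} to handle the data-dependent selection thresholds $(t_1^*, t_2^*)$. The key observation is that at a chosen solution of equations (\ref{sel-fdr}), the rejection set of Procedure \ref{procfdrsym} coincides with $\mathcal{R} = \mathcal{S}_1(t_1^*) \cap \mathcal{S}_2(t_2^*)$: letting $R = |\mathcal{R}|$, the self-consistency relations $t_1^* = R\alpha_1/|\mathcal{S}_2(t_2^*)|$ and $t_2^* = R\alpha_2/|\mathcal{S}_1(t_1^*)|$ hold, so every $j \in \mathcal{R}$ satisfies $(p_{1j}, p_{2j}) \leq (R\alpha_1/S_2^*, R\alpha_2/S_1^*)$, where $S_i^* = |\mathcal{S}_i(t_i^*)|$. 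This is precisely the rejection form used in Procedure \ref{procfdrsym}, except that now the selection sets themselves depend on the full p-value vectors.

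Using the inequality $1-H_{1j}H_{2j} \leq (1-H_{1j}) + (1-H_{2j})$, I would split
\[ FDR \leq E\left[\frac{\sum_{j: H_{1j}=0} \mathbf{I}[j \in \mathcal{R}]}{\max(R,1)}\right] + E\left[\frac{\sum_{j: H_{2j}=0} \mathbf{I}[j \in \mathcal{R}]}{\max(R,1)}\right], \]
and by the symmetric roles of the two studies it suffices to bound the first term by $\alpha_1$. Fix a null $j$ with $H_{1j}=0$. The null independence-across-studies condition combined with independence of null p-values from all other p-values within study one implies that $P_{1j}$ is independent of $(P_1^{(j)}, P_2)$. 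I would condition on $(P_1^{(j)}, P_2)$ and introduce leave-one-out thresholds $(t_1^{*(j)}, t_2^{*(j)})$ and $R^{(j)}$ defined by the analogue of equations (\ref{sel-fdr}) with $p_{1j}$ set to $0$, so that $j$ is forced into both selection sets. A monotonicity argument should then show that the event $\{j \in \mathcal{R}\}$ is contained in $\{P_{1j} \leq R^{(j)}\alpha_1/S_2^{*(j)}\}$, with a bounding threshold that no longer depends on $P_{1j}$. Combining this with the exchangeability of null p-values in study one, a telescoping/martingale argument in the spirit of Lemma \ref{lem_fdr} then yields the bound $\alpha_1$ on the first term, and symmetrically $\alpha_2$ on the second.

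The main obstacle will be making the leave-one-out construction rigorous when the nonlinear equations (\ref{sel-fdr}) admit multiple solutions: one must verify that the specific solution produced by the algorithm varies monotonically with $p_{1j}$ in a sense that supports the domination $\{j \in \mathcal{R}\} \subseteq \{P_{1j} \leq R^{(j)}\alpha_1/S_2^{*(j)}\}$. This likely requires committing to a canonical choice (e.g., the smallest or the largest solution) and exploiting the monotonicity of $|\mathcal{S}_1(\cdot)|$ and $|\mathcal{S}_2(\cdot)|$ in their arguments, so that driving $p_{1j}$ to $0$ can only enlarge the selection sets and hence (along a coherent branch of the solution) only enlarge the rejection thresholds. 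Given such a choice, the exchangeability assumption permits reinterpreting the null sum as an average over symmetric positions of a single null p-value, reducing the problem to the classical data-dependent BH-type FDR calculation already executed in the proof of Lemma \ref{lem_fdr}.
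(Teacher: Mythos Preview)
Your proposal is essentially the paper's proof: the same split via $1-H_{1j}H_{2j}\le(1-H_{1j})+(1-H_{2j})$, the same leave-one-out thresholds obtained by forcing $p_{1j}$ to lie below the current $t_1$, conditioning on $(P_1^{(j)},P_2)$, and the same use of exchangeability of the null $p$-values to replace each $(u_1^{*(j)},u_2^{*(j)})$ by a common $(u_1^{*(j_0)},u_2^{*(j_0)})$ and conclude with the ratio bound $\sum_{j}(1-H_{1j})\mathbf{I}[P_{2j}\le u_2^{*(j_0)}]/S_2(u_2^{*(j_0)})\le 1$.

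Two small calibrations. First, the paper does not need a monotonicity argument for the inclusion $\{j\in\mathcal R\}\subseteq\{P_{1j}\le R^{(j)}\alpha_1/S_2^{*(j)}\}$: on the event $\{P_{1j}\le u_1^*\}$ the leave-one-out system of equations is literally the same system as the original (all occurrences of $p_{1j}$ enter only through indicators that are already $1$), so the chosen solution is automatically also a solution of the leave-one-out system and one gets \emph{equality} $(u_1^*,u_2^*)=(u_1^{*(j)},u_2^{*(j)})$ rather than merely domination. Second, the final step is not the $C_k^{(j)}$-partitioning of Lemma~\ref{lem_fdr}; it is a one-line cancellation: since $u_1^{*(j)}=R^{(j)}\alpha_1/S_2(u_2^{*(j)})$ and the FDR denominator equals $R^{(j)}$ on the relevant event, the $R^{(j)}$'s cancel, leaving $\alpha_1\,\mathbf{I}[P_{2j}\le u_2^{*(j)}]/S_2(u_2^{*(j)})$, after which exchangeability finishes the job.
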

The proof of Theorem \ref{thm-FDR-data-dep} is given in Appendix
\ref{app-FDR-data-dep}.

\subsection{Proof of Lemma \ref{lemma-estthreshold}}\label{app-lemmaproof}
The proof is by contradiction. Suppose that there exists a pair
$(t_1^o, t_2^o)$ that dominates $(t^*_1,t^*_2)$, in the sense that
$$\binom{\min\left(t_1^o, \frac{\alpha_1}{G_2(t_2^o)}\right)}{\min\left(\frac{\alpha_2}{ G_1(t_1^o)},t_2^o  \right)} > \binom{\min\left(t_1^*, \frac{\alpha_1}{G_2(t^*_2)}\right)}{\min\left(\frac{\alpha_2}{ G_1(t_1^*)},t_2^*  \right)}.$$
 Then either the first coordinate or the second coordinate satisfy a strict inequality. Without loss of generality, assume
that the first coordinate satisfies a strict inequality, i.e.
\begin{align}&t_1^*= \frac{\alpha_1}{G_2(t^*_2)}<\min\left(t_1^o,
\frac{\alpha_1}{
G_2(t_2^o)}\right),\label{t1}\\&t_2^*=\frac{\alpha_2}{G_1(t_1^*)}\leq
\min\left(t_2^o, \frac{\alpha_2}{G_1(t_1^o)}\right)\label{t2}
.\end{align} It follows from (\ref{t1}) that
$\frac{\alpha_1}{G_2(t^*_2)}<\frac{\alpha_1}{ G_2(t_2^o)}$,
therefore using the fact that $G_2(t_2)$ is a monotone increasing
function we obtain that $t_2^o<t_2^*$. It follows from (\ref{t2})
that $t_2^0\geq t_2^*.$ A contradiction is thus reached.
%
%
%Two inequalities follow:
%\begin{enumerate}
%\item $\frac{\alpha_1}{G_2(t^*_2)}<\frac{\alpha_1}{ G_2(t_2^o)}$, therefore  $t_2^o<t_2^*$.
%\item $t_1^*<t_1^o \rightarrow \frac{\alpha_2}{ G_1(t_1^o)}<\frac{\alpha_2}{G_1(t_1^*)}$.
%\end{enumerate}
%A contradiction is thus reached, since
%$\min\left(\frac{\alpha_1}{G_1(t_1^o)},t_2^o
%\right)<\min\left(\frac{\alpha_2}{G_1(t_1^*)},t_2^*  \right)$.

\subsection{Proof of Theorem \ref{thm-fwer-data-dep}}\label{app-thm-fwer-data-dep}
Procedure \ref{proc-FWER} based on Bonferroni makes replicability
claims for features with indices in the set $\{j: P_{1j}\leq u_1,
P_{2j}\leq u_2\},$ where $u_1=\min(t_1, \alpha_1/S_2(t_2)),$
$u_2=\min(t_2, \alpha_2/S_1(t_1)).$ Obviously the choice of
selection thresholds $t_1^*, t_2^*$ solving the equations
(\ref{eq-nonlin}) leads to the rejection thresholds $(u_1^*,
u_2^*)=(t_1^*, t_2^*),$ i.e. satisfying
\begin{equation}\label{eq-bonf1}
u_1^*=\alpha_1/S_2(u_2^*), u_2^*=\alpha_2/S_1(u_1^*).
\end{equation}
Thus the FWER of Procedure \ref{proc-FWER} using $(t_1^*, t_2^*)$ is
bounded above by
$$\sum_{j=1}^m(1-H_{1j})\textmd{Pr}(P_{1j}\leq u_1^*, P_{2j}\leq u_2^*)+\sum_{j=1}^m(1-H_{2j})\textmd{Pr}(P_{1j}\leq u_1^*, P_{2j}\leq u_2^*).$$
We shall only show that the upper bound of the first sum is at most
$\alpha_1$, since the proof that the upper bound of the second sum
is at most $\alpha_2$ follows similarly.

For each $j\in \{1,\ldots, m\}$,  define $(u_1^{*(j)}, u_2^{*(j)})$
to be the solution of the equations
$$S_2(u_2)u_1= \alpha_1, [S_{1}^{(j)}(u_1)+1]u_2= \alpha_2.$$ Note
that $(u_1^{*(j)}, u_2^{*(j)})$ are independent of $P_{1j}$ and that
if $P_{1j}\leq u_1^*$ then $(u_1^*, u_2^*) = (u_1^{*(j)},
u_2^{*(j)})$.

Consider now $j$ with $\vec H_j \in \{(0,0), (0,1)\}$:
\begin{eqnarray}
&&\textmd{Pr}(P_{1j}\leq u_1^*, P_{2j}\leq u_2^*\vert P_{1}^{(j)}, P_{2}) \nonumber \\ &&= \textmd{Pr}(P_{1j}\leq u_1^{*(j)}, P_{2j}\leq u_2^{*(j)}, (u_1^*, u_2^*) = (u_1^{*(j)}, u_2^{*(j)}) \vert P_{1}^{(j)}, P_{2}) \nonumber \\
&&\leq  \textmd{Pr}( P_{1j}\leq \frac{\alpha_1}{S_2(u_2^{*(j)})}, P_{2j}\leq u_2^{*(j)} \vert P_{1}^{(j)}, P_{2}) \nonumber\\%\label{eq-bonf3}\\
&&=\frac{\alpha_1}{S_2(u_2^{*(j)})}\textbf{I}[P_{2j}\leq u_2^{*(j)}]
\label{eq-bonf4}
\end{eqnarray}
%The inequality (\ref{eq-bonf3}) follows from  inequality
%(\ref{eq-bonf2}).
The equality  (\ref{eq-bonf4}) follows from the fact that $H_{1j}
=0$,  so $P_{1j}$ has a  uniform distribution (or is stochastically
larger than uniform). Let $j_0\in \{j:  \vec H_j \in \{(0,0),
(0,1)\}\}$ be an arbitrary fixed index . It thus follows that
\begin{eqnarray}
&&\sum_{j=1}^m(1-H_{1j})\textmd{Pr}(P_{1j}\leq u_1^*, P_{2j}\leq u_2^*) = \sum_{j=1}^m(1-H_{1j})E\{\textmd{Pr}(P_{1j}\leq u_1^*, P_{2j}\leq u_2^*\vert P_{1}^{(j)}, P_{2})\} \nonumber\\
&&\leq \sum_{j=1}^m(1-H_{1j})E\left\{\frac{\alpha_1}{S_2(u_2^{*(j)})}\textbf{I}[P_{2j}\leq u_2^{*(j)}]\right\}\nonumber \\
&& = \sum_{j=1}^m(1-H_{1j}) E\left\{\frac{\alpha_1}{S_2(u_2^{*(j_0)})}\textbf{I}[P_{2j}\leq u_2^{*(j_0)}]\right\}\label{eq-bonf5} \\
&& \leq \alpha_1
E\left\{\frac{\sum_{j=1}^m(1-H_{1j})\textbf{I}[P_{2j}\leq
u_2^{*(j_0)}]}{S_2(u_2^{*(j_0)})}\right\}\leq \alpha_1.
\end{eqnarray}
The equality (\ref{eq-bonf5}) follows from the fact that the
distribution of $\left(P_1^{(j_0)}, P_2\right)$ is the same as that
of $\left(P_1^{(j)}, P_2\right)$ for every
for every $j$ with $\vec{H}_j\in\{(0,0), (0,1)\}$, since the $p$-values are assumed to be
independent and exchangeable under the null.

\subsection{Proof of Theorem \ref{thm-FDR-data-dep}}\label{app-FDR-data-dep}
Procedure \ref{procfdrsym} makes the replicability claims for
features with indices in the set $\{j: P_{1j}\leq u_1, P_{2j}\leq
u_2\},$ where $u_1=\min(t_1, R\alpha_1/S_2(t_2)),$ $u_2=\min(t_2,
R\alpha_2/S_1(t_1),$ and
$$R=\max\left\{r:
\sum_{j\in\mathcal{S}_1(t_1)\cap\mathcal{S}_2(t_2)}\textbf{I}\left[(p_{1j},
p_{2j})\leq\left(\frac{r\alpha_1}{S_2(t_2)},
\frac{r\alpha_2}{S_1(t_1)}\right)\right] = r\right\}.$$ Note that
$R\leq |\mathcal{S}_1(t_1)\cap \mathcal{S}_2(t_2)|.$ In addition,
when the choice of selection thresholds is $(t_1, t_2)=(t_1^*,
t_2^*),$ which satisfy
\begin{align}t_1^*=\frac{|\mathcal{S}_1(t_1^*)\cap
\mathcal{S}_2(t_2^*)|\alpha_1}{S_2(t_2^*)},
t_2^*=\frac{|\mathcal{S}_1(t_1^*)\cap
\mathcal{S}_2(t_2^*)|\alpha_2}{S_1(t_1^*)},\label{eqnonlinfdr}\end{align}
it holds that
$$\sum_{j\in\mathcal{S}_1(t_1^*)\cap\mathcal{S}_2(t_2^*)}\textbf{I}\left[(p_{1j},
p_{2j})\leq\left(\frac{|\mathcal{S}_1(t_1^*)\cap
\mathcal{S}_2(t_2^*)|\alpha_1}{S_2(t_2^*)},
\frac{|\mathcal{S}_1(t_1^*)\cap\mathcal{S}_2(t_2^*)|\alpha_2}{S_1(t_1^*)}\right)\right]
= |\mathcal{S}_1(t_1^*)\cap\mathcal{S}_2(t_2^*)|.$$ Therefore, the
choice of selection thresholds $t_1^*, t_2^*$ solving the equations
in (\ref{eqnonlinfdr}) leads to
$R=|\mathcal{S}_1(t_1^*)\cap\mathcal{S}_2(t_2^*)|,$ and to the
rejection thresholds $(u_1^*, u_2^*)=(t_1^*, t_2^*).$ Thus the FDR
of Procedure \ref{procfdrsym} using $(t_1^*, t_2^*)$ is bounded
above by
\begin{align}FDR&=E\left(\frac{\sum_{j=1}^m(1-H_{1j})\textbf{I}(P_{1j}\leq
u_1^*, P_{2j}\leq u_2^*)}{\sum_{j=1}^m \textbf{I}(P_{1j}\leq u_1^*,
P_{2j}\leq
u_2^*)}\right)\notag\\&+E\left(\frac{\sum_{j=1}^m(1-H_{2j})\textbf{I}(P_{1j}\leq
u_1^*, P_{2j}\leq u_2^*)}{\sum_{j=1}^m \textbf{I}(P_{1j}\leq u_1^*,
P_{2j}\leq u_2^*)}\right),\label{fdr-data-dep}\end{align} where
$u_1^*$ and $u_2^*$ satisfy
$$u_1^*=\frac{|\mathcal{S}_1(u_1^*)\cap \mathcal{S}_2(u_2^*)|\alpha_1}{S_2(u_2^*)}, u_2^*=\frac{|\mathcal{S}_1(u_1^*)\cap \mathcal{S}_2(u_2^*)|\alpha_2}{S_1(u_1^*)}.$$
We shall only show that the first term of the sum in
(\ref{fdr-data-dep}) is upper bounded by $\alpha_1.$ The second term
of the sum in (\ref{fdr-data-dep}) is upper bounded by $\alpha_2,$
which yields that the FDR is upper bounded by $\alpha.$ The proof
that the the second term of the sum in (\ref{fdr-data-dep}) is at
most $\alpha_2$ follows similarly and is therefore omitted.

For $j\in\{1,\ldots,m\},$ define $(u_1^{*(j)}, u_2^{*(j)})$ to be
the solution of the equations
\begin{eqnarray}
&&u_1=\frac{(\textbf{I}[P_{2j}\leq u_2]+|\mathcal
S_{1}^{(j)}(u_1)\cap\mathcal
S_{2}^{(j)}(u_2)|)\alpha_1}{S_2(u_2)}, \nonumber \\
&& u_2=\frac{(\textbf{I}[P_{2j}\leq u_2]+|\mathcal
S_{1}^{(j)}(u_1)\cap\mathcal
S_{2}^{(j)}(u_2)|)q_2}{1+S_{1}^{(j)}(u_1)},
\end{eqnarray}
%where $\mathcal S_{i(-j)}(t)$ is defined in Appendix
%\ref{app-notation} for $i=1,2$.
 Note that if $P_{1j}\leq u_1^*$, then $(u_1^*, u_2^*) = (u_1^{*(j)},
u_2^{*(j)})$, and $|\mathcal S_1(u_1^*)\cap
\mathcal{S}_2(u_2^*)|=\textbf{I}[P_{2j}\leq
u_2]+|\mathcal{S}_{1}^{(j)}(u_1^{*(j)})\cap
\mathcal{S}_{2}^{(j)}(u_2^{*(j)})|.$ In addition, both $(u_1^{*(j)},
u_2^{*(j)})$ and $\textbf{I}[P_{2j}\leq
u_2]+|\mathcal{S}_{1}^{(j)}(u_1^{*(j)})\cap
\mathcal{S}_{2}^{(j)}(u_2^{*(j)})|$ are independent of $P_{1j}.$
Therefore,
\begin{align}
&E\left(\frac{\sum_{j=1}^m(1-H_{1j})\textbf{I}[P_{1j}\leq u_1^*,
P_{2j}\leq u_2^*]}{\sum_{j=1}^m \textbf{I}[P_{1j}\leq u_1^*,
P_{2j}\leq u_2^*]}\,|\,P_1^{(j)},
P_2\right)\notag\\&=\sum_{j=1}^m(1-H_{1j})E\left(\frac{\textbf{I}[P_{1j}\leq
u_1^*, P_{2j}\leq u_2^*]}{|\mathcal{S}_1(u_1^*)\cap
\mathcal{S}_2(u_2^*)|}\,|\,P_1^{(j)}, P_2\right)\notag\\
& =\sum_{j=1}^m(1-H_{1j})E\left(\frac{\textbf{I}[P_{1j}\leq u_1^*,
P_{2j}\leq u_2^*, u_1^*= u_1^{*(j)}, u_2^*=
u_2^{*(j)}]}{\textbf{I}[P_{2j}\leq u_2^{*(j)}]
+|\mathcal{S}_{1}^{(j)}(u_1^{*(j)})\cap
\mathcal{S}_{2}^{(j)}(u_2^{*(j)})|}\,|\,P_1^{(j)}, P_2\right)\notag
\\&\leq \sum_{j=1}^m(1-H_{1j})E\left(\frac{\textbf{I}[P_{1j}\leq u_1^{*(j)}, P_{2j}\leq
u_2^{*(j)}]}{\textbf{I}[P_{2j}\leq
u_2^{*(j)}]+|\mathcal{S}_{1}^{(j)}(u_1^{*(j)})\cap
\mathcal{S}_{2}^{(j)}(u_2^{*(j)})|}\,|\,P_1^{(j)}, P_2\right)\notag
\\&=\sum_{j=1}^m(1-H_{1j})\frac{\textmd{Pr}(P_{1j}\leq u_1^{*(j)})
%\frac{[S_2(u_2^{*(-j)})^{(-j)}+1]u_1^{*(-j)}}{1+|\mathcal
%S_1(u_1^{*(-j)})^{(-j)}\cap\mathcal S_2(u_2^{*(-j)})^{(-j)}}|,
\textbf{I}[P_{2j}\leq u_2^{*(j)}]}{\textbf{I}[P_{2j}\leq
u_2^{*(j)}]+|\mathcal{S}_{1}^{(j)}(u_1^{*(j)})\cap
\mathcal{S}_{2}^{(j)}(u_2^{*(j)})|}\label{fdr-data-dep-ind}\\&\leq
\sum_{j=1}^m(1-H_{1j})\frac{u_1^{*(j)}\textbf{I}[P_{2j}\leq
u_2^{*(j)}]}{\textbf{I}[P_{2j}\leq
u_2^{*(j)}]+|\mathcal{S}_{1}^{(j)}(u_1^{*(j)})\cap
\mathcal{S}_{2}^{(j)}(u_2^{*(j)})|}\label{fdr-data-dep-ind2}\\&=
\alpha_1\sum_{j=1}^m(1-H_{1j})\frac{\textbf{I}[P_{2j}\leq
u_2^{*(j)}]}{S_2(u_2^{*(j)})}.\label{fdr-data-dep-unif}
\end{align}
The equality in (\ref{fdr-data-dep-ind}) follows from the fact that
$P_{1j}$ and $(P_1^{(j)}, P_2)$ are independent for any $j$ with
$H_{1j}=0.$ The inequality in (\ref{fdr-data-dep-ind2}) follows from
the fact that $P_{1j}$ has a distribution at least stochastically as
large as the uniform distribution for $j$ with $H_{1j}=0,$ and the
equality in (\ref{fdr-data-dep-unif}) follows from the definition of
$(u_1^{*(j)}, u_2^{*(j)}).$

Let $j_0$ be an arbitrary fixed index in $\{j:H_{1j}=0\}$. It thus
follows that
\begin{eqnarray}
&&E\left(\frac{\sum_{j=1}^m(1-H_{1j})\textbf{I}[P_{1j}\leq u_1^*,
P_{2j}\leq u_2^*]}{\sum_{j=1}^m \textbf{I}[P_{1j}\leq u_1^*,
P_{2j}\leq
u_2^*]}\right)=\nonumber\\&&E\left[E\sum_{j=1}^m(1-H_{1j})\left(\frac{\textbf{I}[P_{1j}\leq
u_1^*, P_{2j}\leq u_2^*]}{\sum_{j=1}^m \textbf{I}[P_{1j}\leq u_1^*,
P_{2j}\leq u_2^*]}\,|\,P_1^{(j)},
P_2\right)\right]\nonumber\\
&&\leq \alpha_1
\sum_{j=1}^m(1-H_{1j})E\left(\frac{\textbf{I}[P_{2j}\leq
u_2^{*(j)}]}{S_2(u_2^{*(j)})}\right) \nonumber \\
&&  =\alpha_1
E\left\{\frac{\sum_{j=1}^m(1-H_{1j})\textbf{I}[P_{2j}\leq
u_2^{*(j_0)}]}{S_2(u_2^{*(j_0)})}\right\}\label{eq-fdr5} \leq
\alpha_1.
\end{eqnarray}
The equality in (\ref{eq-fdr5}) follows from the fact that the
distribution of $\left(P_1^{(j_0)}, P_2\right)$ is the same as that
of $\left(P_1^{(j)}, P_2\right)$ for every $j$ with $H_{1j}=0,$
since the $p$-values are assumed to be independent and exchangeable
under the null. %Thus we have proved that the first term of the sum
%in (\ref{fdr-data-dep}) is upper bounded by $\alpha_1.$

 \end{document}